\def\draft{0}
\def\llncs{0}
\def\anon{0}
\def\forcepage{0}
\def\ShowAuthNotes{1}
\def\ShowAuthNotes{0}
\newcommand{\remove}[1]{}
\newcommand{\ignore}[1]{}
\definecolor{DarkBlue}{RGB}{0,0,150}
\definecolor{DarkGreen}{RGB}{0,150,0}
\definecolor{Brown}{RGB}{150,75,0}
\definecolor{Yellowish}{RGB}{180,180,0}
\theoremstyle{definition}
\newtheorem{theorem}{Theorem}[section]
\newtheorem{definition}[theorem]{Definition}
\newtheorem{lemma}[theorem]{Lemma}
\newtheorem{remark}{Remark}
\newtheorem{corollary}[theorem]{Corollary}
\spnewtheorem{prop}{Property}{\bfseries}{\itshape}
\spnewtheorem{fact}{Fact}{\bfseries}{\itshape}
\spnewtheorem{subclaim}{Claim}[theorem]{\bfseries}{\itshape}
\spnewtheorem{tlclaim}[theorem]{Claim}{\bfseries}{\itshape}
\def\pfend{\hfill\qedsymbol}
\def\pfend{}
\def\cA{{\cal A}}
\def\cB{{\cal B}}
\def\cC{{\cal C}}
\def\cM{{\cal M}}
\def\cR{{\cal R}}
\def\cS{{\cal S}}
\def\cX{{\cal X}}
\def\cY{{\cal Y}}
\def\bbI{{\mathbb I}}
\def\sA{{\mathscr{A}}}
\def\sB{{\mathscr{B}}}
\def\sF{{\mathscr{F}}}
\def\sI{{\mathscr{I}}}
\def\tr{\textnormal{tr}}
\def\binset{\{0,1\}}
\def\q2{\lfloor q/2 \rceil}
\newcommand{\abs}[1]{\left\vert {#1} \right\vert}
\newcommand{\mx}[1]{\mathbf{{#1}}}
\newcommand{\vc}[1]{\mathbf{{#1}}}
\newcommand{\zo}{\{0,1\}}
\newcommand{\boldpar}[1]{\vspace{3pt}\par\noindent\textbf{#1}}
\newcommand{\authnote}[3]{\textcolor{#3}{[{\footnotesize {\bf #1:} { {#2}}}]}}
\newcommand{\authnote}[3]{}
\newcommand{\onote}[1]{\authnote{O}{#1}{cyan}}
\newcommand{\knote}[1]{\authnote{K}{#1}{magenta}}
\newcommand{\db}{{\mathtt{DB}}}
\newcommand{\absnewline}{\ifnum\llncs=1 \\ \fi}
\def\mgp[#1]{\mx{G}_{#1}}
\def\mgip[#1]{\mx{G}_{#1}^{-1}}
\def\mcit[#1]{\mci[#1]^T}
\def\mci[#1]{\mx{C}_{#1}}
\newcounter{hybridcount}
\newcounter{prevhybridcount}
\newcounter{nexthybridcount}
\def\beginM{\left[\begin{matrix}}
\def\endM{\end{matrix}\right]}
\def\m0{\mx{0}}
\newcommand{\ket}[1]{|{#1}\rangle}
\newcommand{\bra}[1]{\langle{#1}|}
\newcommand{\dblen}{n}
\newcommand{\dbllen}{\ell}
\newcommand{\auxr}{\vc{r}}
\title{On Quantum Advantage in Information Theoretic Single-Server PIR}
\author{}
\institute{}
\author{Dorit Aharonov\inst{1} \and Zvika Brakerski\inst{2} \and Kai-Min Chung\inst{3} \and Ayal Green\inst{1} \and Ching-Yi Lai\inst{5} \and Or Sattath\inst{4}}
\institute{Hebrew University \and Weizmann Institute of Science \and Academia Sinica \and Ben-Gurion University}
\authorrunning{Aharonov, Brakerski, Chung, Green, Lai and Sattath}
\date{}
\author{Dorit Aharonov\thanks{Hebrew University of Jerusalem.} \and Zvika Brakerski\thanks{Weizmann Institute of Science.
} \and Kai-Min Chung\thanks{Academia Sinica.} \and Ayal Green\footnotemark[1] \and Ching-Yi Lai\thanks{National Chiao Tung University.}  \and Or Sattath\thanks{Ben-Gurion University.}}
\date{\today}
\date{}
\renewcommand{\paragraph}{\boldpar}
\let\doendproof\endproof
\renewcommand\endproof{\hfill\qed\doendproof}
\begin{document}

\maketitle


\begin{abstract}
	In (single-server) Private Information Retrieval (PIR), a server holds a large database $\db$ of size $n$, and a client holds an index $i \in [n]$ and wishes to retrieve $\db[i]$ without revealing $i$ to the server. It is well known that information theoretic privacy even against an ``honest but curious'' server requires $\Omega(n)$ communication complexity. This is true even if quantum communication is allowed and is due to the ability of such an adversarial server to execute the protocol on a superposition of databases instead of on a specific database (``input purification attack'').
	Nevertheless, there have been some proposals of protocols that achieve sub-linear communication and appear to provide some notion of privacy. Most notably, a protocol due to Le Gall (ToC 2012) with communication complexity $O(\sqrt{n})$, and a protocol by Kerenidis et al. (QIC 2016) with communication complexity $O(\log(n))$, and $O(n)$ shared entanglement.

	We show that, in a sense, input purification is the only potent adversarial strategy, and protocols such as the two protocols above are secure in a restricted variant of the quantum honest but curious (a.k.a specious) model. More explicitly, we propose a restricted privacy notion called \emph{anchored privacy}, where the adversary is forced to execute on a classical database (i.e. the execution is anchored to a classical database). We show that for measurement-free protocols, anchored security against honest adversarial servers implies anchored privacy even against specious adversaries.
	
	Finally, we prove that even with (unlimited) pre-shared entanglement it is impossible to achieve security in the standard specious model with sub-linear communication, thus further substantiating the necessity of our relaxation. This lower bound may be of independent interest (in particular recalling that PIR is a special case of Fully Homomorphic Encryption).
\end{abstract}

\ifnum\forcepage=1
\thispagestyle{empty}
\newpage
\pagenumbering{arabic}
\fi


\section{Introduction}
\label{sec:introduction}

\onote{TODO: Add link to the arXiv}

Private Information Retrieval (PIR), introduced by Chor et al.~\cite{CGKS95}, is perhaps the most basic form of joint computation with privacy guarantee. PIR is concerned with privately retrieving an entry from a database, without revealing which entry has been accessed. Formally, a PIR protocol is a communication protocol between two parties, a server holding a large database $\db$ containing ${\dblen}$ binary entries\footnote{Throughout this work we will focus on the setting of binary database. We do note that there is vast literature concerned with optimizations for the case of larger alphabet.}, and a client who wishes to retrieve the $i$th element of the database but without revealing the index $i$. Privacy can be defined using standard cryptographic notions such as indistinguishability or simulation (see \cite{GoldreichFoundationsVolume2}). The simplicity of this primitive is since there is no privacy requirement for the database (i.e.\ we allow sending more information than necessary) and that the server is not required to produce any output in the end of the interaction, so functionality and privacy are \emph{one sided}.

Clearly PIR is achievable by sending all of $\db$ to the client. This will have communication complexity ${\dblen}$ and will be perfectly private under any plausible definition since the client sends no information. The absolute optimal result one could hope for is a protocol with logarithmic communication, matching the most communication efficient protocol without privacy constraints, in which the client sends the index $i$ to the server and receives $\db[i]$ in response.

Alas, \cite{CGKS95} proved that linear (in ${\dblen}$) communication complexity is \emph{necessary} for PIR, and that this is the case even in the presence of arbitrary setup information.\footnote{Setup refers to any information that is provided to the parties prior to the execution of the protocol by a trusted entity, but crucially one that does not depend on the parties' inputs. Shared randomness or shared entanglement are common examples.} Despite its pessimistic outlook, this lower-bound served (already in \cite{CGKS95} itself) as starting point to two extremely prolific and influential lines of research, showing that the communication complexity can be vastly improved if we place some restrictions on the server. The first considered \emph{multiple non-interacting} servers (see, e.g., \cite{E12,DG15} and references therein), instead of just a single server, and the second considered \emph{computationally bounded} servers and relying on \emph{cryptographic assumptions} (see, e.g., \cite{CMS99,G09thesis,BV11lwe}).

While our discussion so far referred to protocols executed by classical parties over classical communication channels, the focus of this work is on the quantum setting, where there is a quantum communication channel between the client and server, and where the parties themselves are capable of performing quantum operations. Importantly, we still only require functionality for a classical database and a classical index.

One could hope that introducing quantum channels could allow an information theoretic solution to a problem that classically can only be solved using cryptographic assumptions, as has been the case for quantum key distribution \cite{BB84}, quantum money \cite{W83}, quantum digital signatures~\cite{GC01}, quantum coin-flipping ~\cite{M07,CK09,ACGKM16} and more~\cite{BS16}.
Indeed, the notion of Quantum PIR (or QPIR) is quite a natural extension of its classical counterpart and has also been extensively studied in the literature. Nayak's famous result on the impossibility of random access codes \cite{Nayak99} implies a linear lower bound for non-interactive protocols (ones that consists of only a single message from the server to the client), and implicitly, via extension of the same methods, also for multi-round protocols. Formal variants of this lower bound were proven also by Jain, Radhakrishnan and Sen~\cite{JRS09} (in terms of quantum mutual information) and by Baumeler and Broadbent \cite{BB15}. Indeed, one could trace back all of these results to the notion of \emph{adversary purification} which was used to show the impossibility of various cryptographic tasks in the information-theoretic quantum model starting as early as \cite{Lo97,LC97,M97}. In the context of QPIR, it can be shown that executing a QPIR protocol with sub-linear communication on a superposition of databases instead of on a single database, will leave the server at the end of the execution with a state that reveals some information about the index $i$. This is made explicit in \cite[Section 3.1]{JRS09} and is also implicit in the proof of \cite{BB15}. 

Most relevant to our work is the aforementioned \cite{BB15}, which provides an analysis from a cryptographic perspective and considers a well defined adversarial model known as privacy against specious adversaries, or \emph{the specious model} for short. This adversarial model was introduced by Dupuis, Nielsen and Salvail~\cite{DNS10} as a quantum counterpart to the classical notion of \emph{honest but curious} (a.k.a semi-honest) adversaries.\footnote{As \cite{DNS10} point out, their model is stronger, i.e.\ excludes a larger class of attacks, compared to the honest but curious model, even when restricted to a completely classical setting.} A specious adversary can be thought of as one that contains, as a part of its local state, a sub-state which is indistinguishable from that of the respective honest party, even when inspected jointly with the other party's local state.\footnote{More accurately, indistinguishability is required to hold even in the presence of an environment which can be arbitrary correlated (or entangled) with the parties' inputs. In the quantum setting this usually corresponds to the environment.} 

Let us provide a high level description of the specious model. We provide a general outline for two-party protocols, and not one that is specific to QPIR. Consider a protocol executed between parties $A,B$ on input registers $X,Y$ respectively. Let $A,B$ also denote the local state of the parties at a given point in time. Then the state of an honest execution of the protocol on inputs $XY$ can be described by the joint density matrix of the registers $XABY$. A specious adversarial strategy for party $A$ can be thought of as one where at any point in time, the local state of the adversary is of the form $A'XA$ (i.e.\ the adversary is allowed to maintain additional information, possibly in superposition with other parts of the system), such that the reduced density matrix of $XABY$ is still indistinguishable from the one obtained in an honest execution. This provides a potential advantage to a specious adversary (compared to an honest $A$) since it is quite possible that together with $A'$, the joint state is no longer honest. Thus the local view of the adversary, i.e.\ the registers $A'XA$, might in fact reveal information about $B$'s input $Y$ that was supposed to have been kept private. 

In the QPIR setting, say taking $A$ to be the server and $B$ to be the client, the register $X$ holds the database $\db$, and $Y$ holds the index $i$.
Indeed, \cite{BB15} shows that it is sufficient that $A'$ contains a purification of $XA$, where $X$ is a uniform distribution over all databases. We call this the {\it purification attack}. Thus, while the adversary pretends to execute the protocol on a randomly sampled database, it is in fact executed on a superposition of all possible databases at the same time (indeed this is the case since $A'$ contains a purification of $X$). As explained above, this methodology is not new, but \cite{BB15} analyze and show that no meaningful notion of QPIR can be achieved against this class of adversaries.

While the negative results could leave us pessimistic as to the abilities of quantum techniques to improve the state of the art on single-server PIR, there is some optimism suggested by two works. Le Gall~\cite{LeGall12} proposed a protocol with sub-linear communication (specifically $O(\sqrt{{\dblen}})$). Kerenidis et al.~\cite{KLGR16} proposed two protocols -- an explicit one, with $O(\log{\dblen})$ communication, which requires linear pre-shared entanglement; and a second protocol, with poly-logarithmic communication (and does not require pre-shared entanglement). In terms of privacy, it is shown that in a perfectly honest execution of the protocol, client's privacy is preserved. It might not be immediately clear how to translate this proof of privacy to the existing security models and reconcile it with the negative results. It is explained in \cite{LeGall12} that the protocol is not actually secure if the server deviates from the protocol. However, as \cite{BB15} observed, even a specious attacker that purifies the adversary can violate the security of the protocol, and the privacy proof strongly hinges on the honest execution using a classical database.

\paragraph{Challenges.} The state of affairs prior to this work, was that (non-trivial) QPIR was proven impossible even against fairly weak adversaries (namely, specious). Nevertheless, it appears that \cite{LeGall12,KLGR16} achieve some non-trivial privacy guarantee using sub-linear communication. This privacy guarantee appears not to be captured by the existing security model.  Lastly, we notice that all existing negative results are proven in a standalone model and did not consider protocols where the parties are allowed to share (honestly generated) setup information, such as the one by Kerenidis et al.~\cite{KLGR16}. In the quantum setting, a natural question is whether shared entanglement can help in achieving a stronger result.\footnote{We note that to the best of our understanding, even prior ``entropic'' results such as \cite{JRS09} seem to fall short of capturing the potential additional power of shared entanglement. This is essentially due to the property that if $AB$ are entangled, then it is possible that the reduced state of $B$ will have (much) higher von Neumann entropy than the joint $AB$ (whose entropy might even be $0$).} The goal of this work is to address these challenges.

\subsection{Our Results}
\paragraph{Anchored Privacy.} We start by formalizing a refinement of the standard notion of quantum privacy - one where the adversary is not allowed to purify its input register. We show anchored privacy against specious adversaries follows from anchored privacy against an honest party, if the protocol itself does not require parties to perform measurements (i.e.\ is measurement-free). 
Formally, using our notation from above, privacy in our model is only required to hold if the reduced density matrix of the register $X$ is a standard basis element, i.e.\ a fixed classical value. 
We call our model \emph{anchored} privacy as we can view our adversary as anchored to a specific value for its input $X$.

We observe that Le Gall's $O(\sqrt{n})$ protocol \cite{LeGall12} and the two protocols mentioned above by Kerenidis et al.~\cite{KLGR16} are in fact private against honest servers. We prove that explicitly for the pre-shared entanglement protocol by Kerinidis et al. in Appendix~\ref{sec:protocol}.  Using our reduction we can deduce that these protocol are also anchored private against specious adversaries, namely that so long as the adversary does not attempt to execute the protocol on a superposition of databases (and is still specious in the manner explained above), privacy is guaranteed.
In a sense, we formalize the folklore reliance on input purification to attack cryptographic schemes (and QPIR in particular), and show that in a model where input purification is impossible or prevented via some external restriction, it is possible to achieve security against specious adversaries. 

We believe this model is interesting for three main reasons:
\begin{enumerate}
    \item Conceptually, this model helps clarify the exact reason for the impossibility of QPIR - it is precisely because of the purification attack. Indeed, there is a formal sense in which some anchoring is necessary since we know that for any proposed protocol, allowing to execute on a superposition of inputs allows to violate security -- see the preceeding discussion in Section~\ref{sec:introduction}.
    
    \item We  view the anchored specious model as a stepping stone towards more robust notions. One intriguing future direction (mentioned briefly in our list of open problems) is to try to develop a malicious analog that still implements the ideology of ``forbidden input purification”, e.g. by forcing the adversary to ``classically open” the database before or after the execution in a manner that is consistent with the client’s output. Another interesting direction is to try to enforce anchoring using a two-server setting, thus achieving logarithmic two-server QPIR (which is currently still beyond reach).
    
    \item We believe that our new model may be plausible in certain situations where one could certify that the server cannot employ a superposition on databases.  We note that this model can be externally enforced, e.g.\ by conducting an inspection of the server's local computation device (with a very low probability) and making sure that it complies, and otherwise apply a heavy penalty. One could imagine such an inspection verifying that a copy of the database is stored on a macroscopic device that cannot be placed in superposition using available technology. Another example of a setting where the anchored model could be applicable is when the database contains information with some semantic meaning, so that the client can easily notice when a nonsense value has been used (this is somewhat similar to the setting considered in \cite{GLM08}). 
    We recall that semi-honest protocols are often used as building blocks, with additional external mechanisms that are employed to validate the assumptions of the model, and hope that our model can also be used in this way. Lastly, from a purely scientific perspective, we believe that formalizing and pinpointing a non-trivial model where non-trivial QPIR is possible will allow to better understand this primitive and the relation between quantum privacy and its classical counterpart. 
\end{enumerate}
    


\paragraph{Improved Lower Bound.} It would be instrumental to understand why the known QPIR lower bounds do not apply to our logarithmic protocol described above. Specifically, the protocol makes use of setup (pre-shared entanglement), and one could wonder whether this is the source of improvement, and perhaps with pre-shared entanglement it is possible to prove security even in the standard specious model. We show that this is not the case by providing a lower bound in the specious model even for the  \emph{one-sided communication} from the server to the client. Namely, we show that linear communication from the server to the client is necessary even if we allow arbitrary communication from the client to the server. In particular, this rules out the ability to use the setup to circumvent the lower bound, since the client (which is assumed to be honest) can generate the setup locally, and send the server's share across the channel at the beginning of the protocol. This completes the picture in terms of the impossibility of QPIR in the specious model and further justifies our relaxation of the model in order to achieve meaningful results. 

Noting that PIR can be thought of as a special case of Fully Homomorphic Encryption (FHE), our lower bound implies that even a Quantum Fully Homomorphic Encryption (QFHE) with (even approximate) information theoretic security cannot have non-trivial communication complexity, even if the QFHE protocol is allowed to make use of shared entanglement between the server and the client. We thus generalize (to allow shared prior entanglement) the impossibility results for (even imperfect) QPIR of \cite{BB15} (as well as those of \cite{YPF14} which explicitly referred to QFHE).

\subsection{Overview of Our Techniques}

\paragraph{Anchored-Specious Security.} Recall the notation introduced above for two party protocol $(A,B)$ on inputs $(X,Y)$, and recall that a specious adversary can be thought of as one where the local state of the adversary is of the form $A'XA$. Now let us consider the case of measurement-free protocols and also assume that the client's input $Y$ is a pure state (this can be justified since otherwise we can apply our argument on the joint state of $Y$ and its purifying environment instead of $Y$ itself). In such an execution, it holds that at any stage $XABY$ is a pure superposition (i.e.\ its density matrix is of rank $1$). Now let us consider the joint state together with the specious adversary's additional register, i.e.\ $A'XABY$. Since $(XABY)$ is pure, $A'$ cannot be entangled with it, and therefore $A'$ is in tensor product with the remainder of the state, namely $(XABY)$. It follows that the status of the register $A'$ can be simulated at any point in time without any knowledge of the other components of the protocol. There is a delicate point here, since $A'$ may indeed be in tensor product, but we must also argue that it is independent of $Y$. Intuitively, to see why such dependence on $Y$ cannot occur consider, e.g., $Y=\ket{y_1}+\ket{y_2}$. Then $YA'$ is in the sate $\ket{y_1}\otimes \rho_{A'}+\ket{y_2}\otimes \rho_{A'}$ (importantly the same $\rho_{A'}$ appears twice). However, this state is exactly the purification of executing the protocol either with $Y=\ket{y_1}$ or with $Y=\ket{y_2}$. We conclude that $\rho_{A'}$ must be the same in both settings, and by extension it can be shown to be the same for all $Y$.

After taking care of $A'$, we need to consider the other part of the adversary's state, namely the register $(XA)$. This register is, by definition, identical (or indistinguishable) from the state of an honest party during the execution. Recall that we assume our protocol is anchored private against honest servers. So the local honest state $(XA)$ is guaranteed not to leak information about $B$'s input. Add to that the conclusion about $A'$ being in tensor product and independent of $B$'s state, and we get that the entire local state of the specious adversary does not reveal any disallowed information.

As a conclusion, since we can show, e.g.\ in Le Gall's protocol or in our logarithmic protocol, that an honest execution with a classical $X$ does not leak information about $Y$, this will also be the case in the anchored-specious setting. 


Obviously many details are omitted from this high level overview. For example, a specious adversary is not required to make $(XABY)$ identical to an honest execution but rather only statistically close (in trace distance), which requires a more delicate analysis. Furthermore, the formal construction of a simulator for the adversary as required by the specious definition requires some care to detail.
For the formal definitions and analysis see Section~\ref{sec:model} below.

\paragraph{Our Lower Bound.} We first note that previous lower bound proofs in~\cite{Nayak99,BB15} bounded the \emph{total} communication complexity by a reduction to quantum random access codes. It is not a-priori clear how to generalize this proof method to the presence of shared entanglement. To do so, we provide a new lower bound argument that establishes a linear lower bound on the \emph{server's} communication complexity. Specifically, we show that the server needs to transmit at least roughly $n/2$ qubits to the client, no matter how many qubits is transmitted from the client to the server (assuming that the protocol has sufficiently small correctness and privacy error). As we mentioned above, such a lower bound trivially extends to hold with prior shared entanglement, since one can think of that the shared entanglement is established by the client sending messages to the server.

Our new lower bound argument is based on an interactive leakage chain rule in~\cite{LC18c} and might even be considered conceptually simpler than previous methods. At a high level, we consider a server holding a uniformly random database $\vc{a} \in \zo^n$ and running a QPIR protocol with a client. Initially, from the client's point of view, the database $\vc{a}$ has $n$-bits of min-entropy, and the protocol execution can be viewed as an ``interactive leakage'' that leaks information about $\vc{a}$ to the client. Let $m_A$ and $m_B$ denote the server and the client's communication complexity in the protocol. The interactive leakage chain rule in~\cite{LC18c} states that the min-entropy of $\vc{a}$ can only be decreased by at most $\min\{ 2m_A, m_A + m_B\}$. More precisely, let $\rho_{AB}$ denote the states at the end of the protocol execution where the $A$ register stores the (classical) random database $\vc{a}$ and $B$ denotes the client's local register. The interactive leakage chain rule states that $H_{\min}(A|B)_\rho \geq n - \min\{ 2m_A, m_A + m_B\}$. By the operational meaning of quantum min-entropy, given the client's state $\rho_B$, one cannot predict the database correctly with probability higher than $2^{-(n - \min\{ 2m_A, m_A + m_B\})}$. On the other hand, suppose the protocol is secure against specious servers with sufficiently small correctness and privacy error. We can combine the by-now standard lower bound argument by Lo~\cite{Lo97} and gentle measurement\cite{Win99,Aar04,ON07}, we show that one can reconstruct the database $\vc{a}$ from the client's state $\rho_B$ with a constant probability. 
Combining both claims allows us to establish lower bounds on both the server's and the total communication complexity in a unified way.

\subsection{Remaining Open Problems}

We proposed a new model and a new protocol which, we believe, resurfaces the question of what can be achieved in the context of QPIR. We believe that a number of intriguing questions still remain for future work.

\begin{enumerate}
	\item As discussed above, our model is a relaxation of the specious model, which is by itself a semi-honest model. Such models are fairly restrictive in the sense that they make structural assumptions on the adversary (i.e.\ that it follows the protocol, or contains a part that follows the protocol). Obviously, if we hope for non-trivial results, any model that we formalize must preclude purification of input. It is thus an intriguing question whether it is possible to formulate \emph{malicious} adversarial models that are still purification-free, and what can be said about the plausibility of QPIR in such models.	
	The current definition of anchored privacy will need to be amended, since a malicious server is allowed to just ignore its prescribed input, so a different method of anchoring needs to be devised.
	
	\item Another natural question is whether setup is necessary to achieve logarithmic QPIR in the anchored specious model. We know from Kerenidis et al.'s result that polylogarithmic communication is achievable even without setup. Is there a reason can only improve it when assuming a setup? Another surprising aspect is that the shared entanglement created during the setup is not consumed during the protocol, and can be used for other needs after the execution of the protocol (e.g., running another execution of PIR, or teleportation). A similar phenomenon occurs in quantum information: catalyst quantum states are useful for mapping one bi-partite state to another using LOCC, without consuming the catalyst state~\cite{DP99,Kli07}. The related notion of quantum embezzlement~\cite{vDH03} has a similar property, but in this case, the original shared state changes slightly. The authors are not aware of any other cryptographic protocol with this non-consumption property. 
	
	\item Most state of the art classical PIR protocols (both in the multi-server setting and in the computational cryptographic setting) only require one round of communication. That is, one message (query) from the client to the server (or servers) and one response message. All the existing sublinear QPIR  protocols have multiple rounds.   Understanding the round complexity of QPIR in light of the classical state of the art is also an intriguing direction.
	
	\item A main contribution of this work is to formalize the notion of anchored security and show it can be used to provide a non-trivial 
cryptographic primitive. It would be interesting to study the relevance of this notion (or adequately adapted versions) in the context of a variety of other cryptographic tasks. In particular, the question of whether it is possible to construct information theoretically secure fully homomorphic encryption (FHE) given quantum channels has received attention in recent years (see, e.g., \cite{YPF14}). In homomorphic encryption, the server has a function $f$ and the client has an input $x$, and the goal of the protocol is for the client to learn $f(x)$ without revealing any information about $x$. PIR and FHE functionalities are intimately related (think about a function $f_{\db}(i)=\db[i]$ for FHE, and about executing PIR with database equal to the truth table of some function), and it is thus intriguing whether the anchored model is applicable in the context of FHE as well.
\end{enumerate}

\subsection{Paper Organization}

General preliminaries are provided in Section~\ref{sec:prelim}. We present our new model, and the proof that for pure protocols honest security implies anchored specious security in Section~\ref{sec:model}. Our new lower bound is stated and proven in Section~\ref{sec:lb}. In Appendix~\ref{sec:protocol}, we show that the protocol by Kerenidis et al. is anchored private against specious adversaries.


\def\tdist{\Delta}
\newcommand{\ketbra}[1]{\ket{{#1}}\bra{{#1}}}

\section{Preliminaries}
\label{sec:prelim}

Standard preliminaries regarding Hilbert spaces and quantum states can be found in Appendix~\ref{sec:stdprelim}. We provide below background and definitions concerning two-party quantum protocols, specious adversaries and quantum private information retrieval.

\subsection{Two-Party Quantum Protocols}

As in \cite{BB15}, we base our definitions on the works of \cite{GW07} and \cite{DNS10}. However, we make slight adaptations to allow for prior entanglement between the parties.

\begin{definition}[Two-party quantum protocol]\label{def:qprot} An \textit{s-round, two-party quantum protocol},  denoted $\Pi=\{\mathscr{A,B},\rho_{joint},s\}$ consists of:
\begin{enumerate}
\item input spaces $\mathcal{A}_0$ and $\mathcal{B}_0$ for parties $\mathscr{A}$ and $\mathscr{B}$ respectively,
\item initial spaces $\mathcal{A}_p$ and $\mathcal{B}_p$ ($p$ for pre-shared state) for parties $\mathscr{A}$ and $\mathscr{B}$ respectively,
\item a joint initial state $\rho_{joint}\in\mathcal{A}_p\otimes\mathcal{B}_p$, split between the two parties,
\item memory spaces $\mathcal{A}_1,\dots,\mathcal{A}_s$ for $\mathscr{A}$ and $\mathcal{B}_1,\dots,\mathcal{B}_s$ for $\mathscr{B}$, and communication spaces $\mathcal{X}_1,\dots,\mathcal{X}_s$, $\mathcal{Y}_1,\dots,\mathcal{Y}_{s-1}$,
\item an $s$-tuple of quantum operations $(\mathscr{A}_1,\dots,\mathscr{A}_s)$ for $\mathscr{A}$, where $\mathscr{A}_1:L(\mathcal{A}_0\otimes\mathcal{A}_p)\mapsto L(\mathcal{A}_1\otimes\mathcal{X}_1)$, and $\mathscr{A}_t:L(\mathcal{A}_{t-1}\otimes\mathcal{Y}_{t-1})\mapsto L(\mathcal{A}_t\otimes\mathcal{X}_t)$ $(2\leq t \leq s)$,
\item an $s$-tuple of quantum operations $(\mathscr{B}_1,\dots,\mathscr{B}_s)$ for $\mathscr{B}$, where $\mathscr{B}_1:L(\mathcal{B}_{0}\otimes\mathcal{B}_p\otimes\mathcal{X}_{1})\mapsto L(\mathcal{B}_1\otimes\mathcal{Y}_1)$, $\mathscr{B}_t:L(\mathcal{B}_{t-1}\otimes\mathcal{X}_{t})\mapsto L(\mathcal{B}_t\otimes\mathcal{Y}_t)$ $(2\leq t \leq s-1)$, and $\mathscr{B}_s:L(\mathcal{B}_{s-1}\otimes\mathcal{X}_{s})\mapsto L(\mathcal{B}_s)$.
\end{enumerate}
\end{definition}

Note that in order to execute a protocol as defined above, one has to specify the input, namely a quantum state $\rho_{in}\in S(\mathcal{A}_0\otimes \mathcal{B}_0)$ from which the execution starts.

\begin{definition}[Protocol Execution]
If $\Pi=\{\mathscr{A},\mathscr{B}, \rho_{joint},s\}$ is an $s$-round two-party protocol, then the state after the $t$-th step $(1\leq t \leq 2s)$, and upon input state $\rho_{in}\in S(\mathcal{A}_0\otimes \mathcal{B}_0\otimes\cR)$, for any $\cR$, is defined as
\[
\rho_t(\rho_{in}):=(\mathscr{A}_{(t+1)/2}\otimes I_{\mathcal{B}_{(t-1)/2}}) \dots (\mathscr{B}_1\otimes I_{\mathcal{A}_1})(\mathscr{A}_1\otimes I_{\mathcal{B}_0,\mathcal{B}_p})(\rho_{in}\otimes\rho_{joint}),
\]
for $t$ odd, and 
\[
\rho_t(\rho_{in}):=(\mathscr{B}_{t/2}\otimes I_{\mathcal{A}_{t/2}}) \dots (\mathscr{B}_1\otimes I_{\mathcal{A}_1})(\mathscr{A}_1\otimes I_{\mathcal{B}_0,\mathcal{B}_{p}})(\rho_{in}\otimes\rho_{joint}),
\]
for $t$ even.
We define the final state of protocol $\Pi=\{\mathscr{A},\mathscr{B}, \rho_{joint},s\}$ upon input state $\rho_{in}\in S(\mathcal{A}_0\otimes \mathcal{B}_0\otimes\cR)$ as: $\left[\mathscr{A}\circledast\mathscr{B}\right](\rho_{in}):=\rho_{2s}(\rho_{in})$.
\end{definition}

The communication complexity of a protocol is the number of qubits that are exchanged between the parties. Slightly more generally, we can consider the logarithm of the dimension of the message registers $\cX_t$, $\cY_t$. The formal definition thus follows.

\begin{definition}[Communication Complexity]
The  \emph{communication complexity} of a protocol as in Definition~\ref{def:qprot} is $$\sum_{t=1}^s \log \dim(\cX_t)+\sum_{t=1}^{s-1}\log \dim(\cY_t)~.$$
We sometimes also refer to one-sided communication complexity, i.e.\ the total communication originating from one party to the other. The communication complexity of $\mathscr{A}$ is defined to be the communication originating from $\mathscr{A}$, or formally $\sum_{t=1}^s \log \dim(\cX_t)$. Symmetrically the communication complexity of $\mathscr{B}$ is $\sum_{t=1}^{s-1} \log \dim(\cY_t)$.
\end{definition}

\subsection{Specious Adversary}
Given a two-party quantum protocol $\Pi=\{\mathscr{A,B},\rho_{joint},s\}$,
an adversary $\tilde{\mathscr{A}}$ for $\mathscr{A}$ is an $s$-tuple of quantum operations $(\mathscr{A}_1,\dots,\mathscr{A}_s)$, where
$\tilde{\mathscr{A}}_1: L(\tilde{\mathcal{A}}_{0})\mapsto L(\mathcal{A}_{1}\otimes \cX_1)$ and
$\tilde{\mathscr{A}}_t: L(\tilde{\mathcal{A}}_{t-1}\otimes \mathcal{Y}_{t-1} )\mapsto L(\tilde{\mathcal{A}}_{t}\otimes \cX_t)$, $2\leq t\leq s$, with $\tilde{\cA}_1,\dots, \tilde{\cA}_s$ being $\tilde{\mathscr{A}}$'s memory spaces.
The global state after the $t$th step of a protocol run with $\tilde{\mathscr{A}}$ is $\tilde{\rho}_t(\tilde{\mathscr{A}},\rho_{in})$.
An adversary $\tilde{\mathscr{B}}$ for $\mathscr{B}$ is similarly defined.

\begin{definition}[Specious adversaries]\label{def:speciousadv} Let $\Pi=\{\mathscr{A},\mathscr{B}, \rho_{joint},s\}$ be an $s$-round two-party protocol. An adversary $\tilde{\mathscr{A}}$ for $\mathscr{A}$ is said to be $\gamma$-specious, if there exists a sequence of quantum operations (called recovery operators) $\mathscr{F}_1,\dots,\mathscr{F}_{2s}$, such that for $1\leq t\leq 2s$ and for all $\rho_{in}\in S(\mathcal{A}_0\otimes\mathcal{B}_0\otimes\mathcal{R})$:
\begin{enumerate}
\item For all $t$ even, $\mathscr{F}_t: L(\tilde{\mathcal{A}}_{t/2})\mapsto L(\mathcal{A}_{t/2})$.
\item For all $t$ odd, $\mathscr{F}_t: L(\tilde{\mathcal{A}}_{(t+1)/2}\otimes \mathcal{X}_{(t+1)/2} )\mapsto L(\mathcal{A}_{(t+1)/2} \otimes \mathcal{X}_{(t+1)/2})$.
\item For every input state $\rho_{in}\in S(\mathcal{A}_0\otimes\mathcal{B}_0\otimes \mathcal{R})$, for any $\cR$,  
\begin{equation}
\Delta\left( \left(\mathscr{F}_t\otimes I_{\mathcal{B}_t,\mathcal{R}} \right)\left(\tilde{\rho}_t(\tilde{\mathscr{A}},\rho_{in})\right), \rho_t\left(\rho_{in} \right)  \right) \leq \gamma.
\label{eq:speciousness}
\end{equation}

\end{enumerate}
\end{definition}
A $\gamma$-specious adversary $\tilde{\mathscr{B}}$ for $\mathscr{B}$ is similarly defined.

\subsection{Quantum Private Information Retrieval} \label{sec:QPIR}

We define QPIR similarly to \cite{BB15}.

\begin{definition}[Quantum Private Information Retrieval]\label{def:qpir}
An $s$-round, $n$-bit Quantum Private Information Retrieval protocol (QPIR) is a two-party protocol $\Pi_{QPIR}=\{\mathscr{A},\mathscr{B},\rho_{joint},s\}$, where $\mathscr{A}$ is the server, $\mathscr{B}$ is the client, and $\rho_{joint}$ is an initial state shared between them prior to the protocol.
We call $\Pi_{QPIR}$ $(1-\delta)$-correct if, for all inputs $\rho_{in}=\ket{x}\bra{x}_{\cA_0}\otimes \ket{i}\bra{i}_{\cB_0}$, with $x=x_1,\dots, x_n\in\{0,1\}^n$ and $i\in\{1,\dots,n\}$, there exists a measurement $\mathcal{M}$ acting on $\cB_s$
with outcome $0$ or $1$, such that:
\begin{align*}
& \Pr\left\{ \cM\left( \tr_{\cA_s} \left[\mathscr{A}\circledast\mathscr{B}\right](\rho_{in}) \right)=x_i \right\}\geq 1-\delta~.
\end{align*}

If $\delta=0$ we say that the protocol is perfectly correct.

We call $\Pi_{QPIR}$ $\epsilon$-private  
against a (possibly adversarial) server $\tilde{\mathscr{A}}$, if there exists a 
sequence of quantum operations (simulators) $\mathscr{I}_1,\dots,\mathscr{I}_{s-1}$, where $\mathscr{I}_{t}: L(\cA_0\otimes\cA_p)\mapsto L(\tilde{\cA}_t \otimes \cY_t)$, such that for all $1\leq t\leq s-1$ and for all $\rho_{in}\in S(\cA_0\otimes \cB_0\otimes \cR)$,
\begin{align}
& \Delta\left(    \tr_{\cB_0} \left(\mathscr{I}_{t} \otimes \mathbb{I}_{\cB_0,\cR} (\rho_{in})\right) 
,    \tr_{\cB_t}(\tilde{\rho}_{2t}(\tilde{\mathscr{A}},\rho_{in}) ) \right) \leq \epsilon~.
\label{eq:requirements_from_QPIR}
\end{align}
If $\epsilon=0$ we say that the protocol is perfectly private.

We say that a QPIR protocol is $\epsilon$-private against a class of servers if it is $\epsilon$-private against any server from this class.
\end{definition}

We note that in the above definition privacy is required to hold also for adversarial input states for the client and server, which also includes inputs in superposition, and even for the case where the client and server (and possibly a third party) are entangled.
Nayak~\cite{Nayak99,ANTV02} showed that a perfectly private QPIR protocol, even only against $0$-specious servers, must have communication complexity at least $(1-H(1-\delta))n$, where $H(p)$ is the binary entropy function. 
Baumeler and Broadbent~\cite{BB15} extended  this lower bound to the case of $\epsilon>0$ and presented a communication lower bound of
\begin{align}
\left(1-H\left(1-\delta-2\sqrt{\epsilon(2-\epsilon)}\right)\right)n~. \label{eq:nayak_bound}
\end{align}


\section{Anchored Privacy Against Specious Adversaries}
\label{sec:model}
We now present our new restricted notion of privacy, that we call \emph{anchored privacy}. A protocol is anchored private if it satisfies the standard definition of privacy with respect to classical inputs on the adversary's side. There is no privacy requirement for superposition input states on the adversary's side (and therefore this notion of privacy is weaker, and hence, easier to achieve). A formal definition follows.

\begin{definition}[Anchored Privacy]
A QPIR protocol is anchored $\epsilon$-private if Eq.~\eqref{eq:requirements_from_QPIR} holds for all $\rho_{in}\in \cA_0\otimes\cB_0\otimes\cR$ (for any $\cR$), for which $\rho_{in}|_{\cA_0}=\ket{x}\bra{x}$ for some $x\in\{0,1\}^n$.
\end{definition}

We note that prior intuitive notions of security such as that implied by the analysis of Le Gall \cite{LeGall12} in fact correspond to anchored privacy against honest servers. Our main theorem below shows that this type of privacy extends to the specious setting as well.

\begin{theorem}\label{thm:main} Let $\Pi$ be a measurement-free QPIR protocol which is anchored $\epsilon$-private against honest servers, then $\Pi$ is anchored $({\epsilon}+3\sqrt{2\gamma})$-private against $\gamma$-specious servers.
\end{theorem}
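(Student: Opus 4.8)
The plan is to follow the intuition sketched in the overview: for a measurement-free protocol the honest global state is pure, so a specious server's extra register $\tilde{\cA}$ must essentially be in tensor product with the rest of the system and, crucially, independent of the client's input $i$. I would then build the anchored-specious simulator by composing the guaranteed honest-server simulator with a fixed (input-independent) preparation of the extra register, and control the error via Uhlmann's theorem and the gentle-measurement-type inequality $\Delta(\rho,\sigma)\le\sqrt{\Delta(\rho,\sigma)}$ relating trace distance and fidelity.

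First I would set up the pure-state picture. Fix a classical server input $\ket{x}\bra{x}_{\cA_0}$ and, by purifying as remarked in the statement's hypothesis, assume the client input together with $\cR$ is a pure state $\ket{\psi}_{\cB_0\cR}$. Since $\Pi$ is measurement-free, every $\mathscr{A}_t,\mathscr{B}_t$ is an isometry (up to tracing out nothing), so the honest intermediate state $\rho_t(\rho_{in})$ is pure; call it $\ket{\phi_t}_{\cA_{t}\cX\cB\cR}$ (abusing notation for whichever communication/memory registers are live at step $t$). Now run the $\gamma$-specious adversary $\tilde{\mathscr{A}}$ with its recovery operators $\mathscr{F}_t$. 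By the speciousness condition~\eqref{eq:speciousness}, $(\mathscr{F}_t\otimes I)(\tilde\rho_t)$ is $\gamma$-close in trace distance to the pure state $\ketbra{\phi_t}$. I would then invoke a standard fact: a state that is $\gamma$-close to a pure state $\ketbra{\phi_t}$ has a purification that is $O(\sqrt{\gamma})$-close (in trace distance, via fidelity) to $\ket{\phi_t}\otimes\ket{\text{junk}}$ for some fixed ancilla; applying this to the purification of $\tilde\rho_t$ given by $\tilde{\mathscr{A}}$'s own memory, and using that $\mathscr{F}_t$ is a channel that can only decrease distinguishability, I get that $\tilde\rho_t$ itself is $O(\sqrt{\gamma})$-close to a state of the form $(\mathscr{F}_t^{-1}$-image of $\ketbra{\phi_t})$ tensored with a register that carries the ``leftover'' of the adversary. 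The key point to extract is that this leftover register's reduced state, call it $\xi$, is $O(\sqrt{\gamma})$-close to something that does \emph{not} depend on the client's input: here I would run the argument from the overview with $\ket{\psi}=\alpha\ket{\psi_1}+\beta\ket{\psi_2}$, noting that the global pure state is then $\alpha\ket{\Phi_1}+\beta\ket{\Phi_2}$ with each $\ket{\Phi_j}$ a valid honest run, and linearity of the isometries forces the marginal on $\tilde{\mathscr{A}}$'s leftover register to coincide across all inputs up to the accumulated $O(\sqrt{\gamma})$ slack.

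Next I would assemble the simulator. Let $\mathscr{I}_t^{\mathrm{hon}}: L(\cA_0\otimes\cA_p)\to L(\cA_t\otimes\cY_t)$ be the honest-server simulator guaranteed by anchored $\epsilon$-privacy against honest servers (its existence is exactly the hypothesis). Define $\mathscr{I}_t$ for the specious adversary by: first apply $\mathscr{I}_t^{\mathrm{hon}}$ to produce $(\cA_t,\cY_t)$ from $(\cA_0,\cA_p)$, then adjoin the fixed state $\xi$ on the leftover register and apply $\mathscr{F}_t^{-1}$ (formally, a channel whose image approximates the inverse; one should instead directly define $\mathscr{I}_t$ to output $\mathscr{F}_t$-preimage data, but the cleanest route is: the adversary's view $\tr_{\cB_t}\tilde\rho_{2t}$ is $\gamma$-close to $\mathscr{F}_{2t}^{-1}$ applied to $\tr_{\cB_t}\rho_{2t}$ tensored with the input-independent $\xi$, and $\tr_{\cB_t}\rho_{2t}$ is $\epsilon$-close to $\tr_{\cB_0}(\mathscr{I}_t^{\mathrm{hon}}\otimes\mathbb I)(\rho_{in})$ by hypothesis). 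Then the triangle inequality over the three approximation steps — speciousness slack, purification/fidelity conversion, and honest-simulator slack — gives the bound $\epsilon + 3\sqrt{2\gamma}$, where the $3\sqrt{2\gamma}$ absorbs the (at most three) fidelity-to-trace-distance conversions each costing a $\sqrt{2\gamma}$-type term, and the constant $\sqrt 2$ is the usual loss in $\Delta(\rho,\sigma)\le\sqrt{1-F(\rho,\sigma)^2}\le\sqrt{2\Delta_{\mathrm{Uhlmann}}}$ bookkeeping.

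The main obstacle I expect is the rigorous handling of the ``leftover register is input-independent'' claim in the \emph{approximate} ($\gamma>0$) regime. In the exact case the tensor-product-and-independence argument is clean because purity is preserved exactly; with only trace-distance-$\gamma$ closeness one must be careful that (i) the purification of $\tilde\rho_t$ living in $\tilde{\mathscr{A}}$'s memory is close to the \emph{intended} purification $\ket{\phi_t}\otimes\ket{\xi}$ and not merely to \emph{some} purification (this is where Uhlmann's theorem and a choice of local isometry come in, and one must check the isometry can be folded into the simulator), and (ii) the two-input linearity argument compounds errors additively across the superposition branches without blowing up — here I would argue at the level of a single superposed input $\ket{\psi}$ rather than comparing two runs, using that the global honest state is a fixed pure function of $\ket{\psi}$ and that speciousness holds for \emph{this} $\rho_{in}$, so the slack stays $O(\sqrt\gamma)$ uniformly. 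Getting the constants to land exactly at $3\sqrt{2\gamma}$ is routine once the structure is right, so I would not belabor it in the plan.
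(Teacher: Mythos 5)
Your proposal follows essentially the same route as the paper's proof: purify the adversary and its recovery maps, use the close-to-a-pure-state-implies-approximately-product lemma to extract an ancilla state on the specious server's extra register, show that this ancilla is input-independent by applying speciousness to a superposition of a fixed reference input and the target input, and build the simulator as the honest-server simulator tensored with that fixed ancilla followed by the inverse of the purified recovery unitary, concluding via the triangle inequality with bound $\epsilon+3\sqrt{2\gamma}$. The only cosmetic differences are that the paper makes the superposition step precise with an explicit control qubit in an untouched reference register $\cR'$ so the two branches can be cleanly projected out, and the $\sqrt{2}$ in the constant comes from renormalizing after projecting onto one amplitude-$\tfrac{1}{\sqrt{2}}$ branch rather than from a fidelity-to-trace-distance conversion.
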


Critically, the theorem only holds for measurement-free QPIR protocols. To see this, consider the following protocol, which will be anchored-private against honest servers but not anchored-private against specious ones. Let $\Pi$ be a QPIR protocol which is anchored-private against honest servers (e.g., Le-Gall's protocol \cite{LeGall12}). Now consider the following protocol $\Pi'$ which first generates a superposition over all possible databases, then measures this superposition to obtain a classical value for the database. It then runs $\Pi$ on this measured database (with the client using its real input index). Finally, both parties toss out the output of this first execution, and run $\Pi$ again, now using the actual input database.

Let us first see that $\Pi'$ is anchored-private against honest servers. 
This follows since $\Pi$ is secure against honest adversaries when executed over input states in which the server's input is classical, and hence so is $\Pi'$ which just consists of two sequential executions of $\Pi$ over classical databases. However, a purification of an honest server allows to execute a purification attack 
on the first execution of $\Pi$ in a way that allows to recover the client's input, even though the database used as input for $\Pi'$ is completely classical.


\paragraph{Warm-up.} We first give a proof under some simplifying assumptions: (i) $\gamma=\epsilon=0$. (ii) the input is pure (iii) the purification space is trivial: $\cR=\mathbb C$ and (iv) the specious server's quantum operations $\tilde{\mathscr{A}}_t$ are unitary. The main point that makes the analysis easier in this case is assumption (i). 

Fix a step of the protocol $t$. 
\begin{enumerate}
\item We claim that for every unitary $\gamma$-specious adversary, which is perfect (i.e. $\gamma=0$) the entire state, (written in some {\it fixed} but maybe non standard basis), is of the form $\ket{\eta}_{\cS'} \otimes \ket{\psi_t}_{\cS,\cC}$ where $\ket{\psi_t}$ is the state that an honest server and client would have when running on the same input. Here, and later, we use the notation $\cS$ for all of the honest server registers at step $t$, $\cC$ for all of the client's registers at step $t$ and $\cS'$ for the specious server's ancillary register at step $t$.  Crucially, $\ket{\eta}$ is independent of the (server and client) input.

We now prove the above claim. By the specious property, we know that there exists a quantum operation $\mathscr{F}_t$ which maps the global state at the $t$th stage to the state $\ket{\psi_t}$. We know that the state in step $t$ in the honest run is necessarily pure since $\Pi$ is measurement free. W.l.o.g. we can assume that the operation $\mathscr{F}_t$ is a unitary $U_t$, followed by tracing out everything other then the $\cS$ and $\cC$ registers.

Let's assume towards contradiction that the state in the basis $U_t^\dagger$ is of the form $\ket{\eta (input)}\otimes \ket{\psi_t}$, where $\ket{\eta(input)}$ depends on the input (where here we mean both the client and the server's input). There must be two different input states such that running them would give $\ket{\eta(1)}\otimes \ket{\psi_t(1)}$ and $\ket{\eta(2)}\otimes \ket{\psi_t(2)}$ for which $\ket{\eta(1)}\neq \ket{\eta(2)}$. Since the honest runs are entirely unitary (by the measurement-free property) and have different inputs, necessarily,  $\ket{\psi_t(1)}\neq \ket{\psi_t(2)}$. By running the specious adversary on a superposition of these two inputs, we get that after applying $\mathscr{F}_t$, the state becomes a mixture of the two states, $\ket{\psi_t(1)}$ and $\ket{\psi_t(2)}$. This contradicts the perfect specious property (see Eq.~\eqref{eq:speciousness}) -- which requires the state to be the pure (since all the operations of the client and honest servers are unitary, and their input in this case is pure).

\item By the perfect anchored-privacy against the honest server, the state $\rho_{t}=\tr_C(\ketbra{\psi_t}_{S,C})$ is independent of the client's input, and therefore, could only depend on $x$ -- the server's input. To emphasize that independence on the client's input (and possible dependence on the server's input), we denote the state $\rho_{t}$ by $\rho_{t}(x)$. 
\end{enumerate}

Our goal is to show the anchored-privacy property for the specious server. Indeed, the two points above show that the specious server's state (in the fixed basis we choose to work in) is $\ketbra{\eta}\otimes \rho_t(x)$, which is independent of the client's input. Therefore the simulator can generate that state exactly by using the server's classical input $x$, as required (see Eq.~\eqref{eq:requirements_from_QPIR}).  

\paragraph{Outline of the general proof.} For each round $t$ we construct a simulator for the server in the following way: we first construct a simulator $\tilde{\sI}_t^{x_0,0}$ for input $\ket{x_0}\otimes \ket{0}$ where $\ket{x_0}$ is an input for the server and $\ket{0}$ is an input for the client. We construct this simulator using the simulator for the honest server along with the 'specious operator', and an ancillary state $\ket{\sigma_{x_0,0}}$. We then show that  $\ket{\sigma_{x_0,0}}$ is also an appropriate ancillary state for any input $\ket{x}\otimes \ket{\eta}$. Using this, we show that $\tilde{\sI}_t^{x,0}$ is indeed a simulator for any input $\ket{x}\otimes \ket{\eta}$, with slightly worse parameters.
 
We are now ready to give the proof in full generality:
\begin{proof}[Theorem~\ref{thm:main} (Proof)]
Let $\Pi$ be a purified QPIR protocol which is anchored $\epsilon$-private against honest servers, and let $\tilde{\sA}$ be a $\gamma$-specious server for $\Pi$. W.l.o.g we can assume that $\tilde{\sA}$ is purified, namely, a unitary\footnote{This is because we can include the purification register at any point, as the server could have included himself rather than throwing it away}. From now on, we will fix $t$. We can denote 

\begin{align}\label{psi i}
\ket{\psi_t^{\rho_{in}}}\bra{\psi_t^{\rho_{in}}}= \rho_t(\rho_{in})
\end{align}

where $\ket{\psi_t^{\rho_{in}}}\in \cS \otimes \cC \otimes \cR$ for some $\cR$, and we use $\cS$ to represent the server's registers $\cS=\cA_t\otimes \cY_t \otimes \cA_{p}$ (for $t$ odd. otherwise $\cS=\cA_t\otimes\cA_{p}$), 
and $\cC$ to represent the client's registers $\cC=\cB_t\otimes \cX_t \otimes \cB_{p}$ (for $t$ even. otherwise $\cC=\cB_t\otimes \cB_{p}$). Furthermore, w.l.o.g we assume the various recovery operators for $\tilde{\sA}$ are purified. That is, there exist unitary operators $\hat{\sF}_t$ such that $\sF_t(\cdot)=\tr_{\cS'}\left(\hat{\sF}_t(\cdot)\right)$ for some purification space $\cS'$ which is at the hands of the server (from now on, for the sake of this proof, where we say "recovery operators" we regard these unitary $\hat{\sF}_t$ operators). Therefore we can denote
\begin{align}\label{tilde psi i}
\ket{\tilde{\psi}_t^{\rho_{in}}}\bra{\tilde{\psi}_t^{\rho_{in}}}=\tilde{\rho}_t\left(\tilde{\sA}, \rho_{in} \right)
\end{align}
where w.l.o.g $\ket{\tilde{\psi}_t^{\rho_{in}}}\in S' \otimes S \otimes C \otimes R$.
We note that all of the unitary operators - $\sA_t, \sB_t $ which are used in the original protocol (by either the server or the client), $\tilde{\sA}_t$ which are used by the specious server $\tilde{\sA}$, and the recovery $\hat{\sF}_t$ operators are independent of both the client's and the server's inputs 

For each round $t$, we will start by constructing a simulator for $\tilde{\sA}$ acting on $\rho_{in}= \ket{x_0}\bra{x_0}_{\cA_0}\otimes \ket{0}\bra{0}_{\cB_0}$, where $x_0\in\{0,1\}^n$ (in this specific input, $\cR$ is trivial and is thus omitted). 
By $\gamma$-speciousness of $\tilde{\sA}$, along with our purification assumptions, there exists a unitary recovery operator $\hat{\sF}_{2t}:L(\tilde{\cA}_t)\mapsto L (\cS'\otimes\cA_t )$ such that
\begin{align}
\Delta\left( 
\tr_{\cS'}\left( \left( \hat{\sF}_{2t} \otimes \bbI_{\cC}\right) \ket{\tilde{\psi}_{2t}^{ \ket{x_0}\otimes\ket{0} }} \right),
\ket{\psi_{2t}^{ \ket{x_0}\otimes\ket{0}  }}
\right)\leq\gamma
\end{align}
By Lemma \ref{trace-in lemma}, this means that there exists a state $\ket{\sigma_{x_0,0}}\in \cS'$ such that:
\begin{align}\label{speciousness for x,0}
\Delta\left(   \left(\hat{\sF}_{2t}\otimes \bbI\right) \ket{\tilde{\psi}_{2t}^{\ket{x_0}\otimes \ket{0}}}   , \ket{\sigma_{x_0,0}} \otimes  \ket{\psi_{2t}^{\ket{x_0}\otimes \ket{0}}}
\right)\leq \sqrt{\gamma}
\end{align}
We can now operate on Eq.~\eqref{speciousness for x,0} with $\hat{\sF}_{2t}^\dagger \otimes \bbI$ to get:

\begin{align}\label{0 simulator part1}
\Delta\left(    
\ket{\tilde{\psi}_{2t}^{\ket{x_0}\otimes \ket{0}}}  ,
\left( \hat{\sF}_{2t}^\dagger \otimes \bbI \right)    \left(\ket{\sigma_{x_0,0}} \otimes   \ket{\psi_{2t}^{\ket{x_0}\otimes \ket{0}}} \right)
\right)\leq \sqrt{\gamma}
\end{align}
The above connects the states derived from the execution with the 
specious server to that with the honest server.  
By anchored $\epsilon$-privacy of $\Pi$ against honest servers, there exists a simulator $\sI_t:L(\mathcal{A}_0\otimes\mathcal{A}_{p})\mapsto L(\cA_t\otimes \cX_t)$ such that for all $x\in\{0,1\}^n$ and $\ket{\alpha}\in \cB_0\otimes\cR$, for any $\cR$, 
\begin{align}\label{0 simulator part2}
\Delta\left( 
\tr_{\cB_0,\cB_{p}}
\left(
\left(\sI_t\otimes\bbI_{\cB_0,\cB_{p}}
     \right) \circ
  \left(
  \ket{x}\bra{x}_{\cA_0}
  \otimes\ket{\alpha}\bra{\alpha}_{\cR,\cB_0}
  \otimes \rho_{joint} 
  \right) 
  \right), \tr_{\cB_t}\left(\ket{\psi_{2t}^{\ket{x}\otimes\ket{\alpha}}} \right)
\right)\leq\epsilon
\end{align}
(In fact, the above holds for any 
mixture over such $\alpha$'s, by 
convexity). 
We can now define the simulator for $\rho_{in}$ corresponding to input state $\ket{x_0}\otimes\ket{0}$ to be the following unitary embedding from $\cA_0\otimes\cA_p$ to $\cS'\otimes\cA_0\otimes\cA_p$: 
\begin{align}\label{0 adversarial simulator}
\tilde{\sI}_t^{x_0, 0}(\cdot) =
  \hat{\sF}_{2t}^\dagger \circ 
  \left( \ket{\sigma_{x_0,0}}\bra{\sigma_{x_0,0}} \otimes  \sI_t \left(\cdot \right) \right)
\end{align}

To show that it indeed satisfies the requirements 
from a simulator, we combine Eqs.~\eqref{0 simulator part1},\eqref{0 adversarial simulator}, and~\eqref{0 simulator part2} for $x=x_0,\ \ket{\alpha}=\ket{0}$, to get that 

\begin{align}\label{0 simulator final}
\Delta\left(
tr_{\cB_0,\cB_{p}}
\left(
\left(\tilde{\sI}_t^{x_0,0}\otimes\bbI_{\cB_0,\cB_{p}}
     \right) \circ
  \left(
  \ket{x_0}\bra{x_0}_{\cA_0}
  \otimes\ket{0}\bra{0}_{\cB_0}
  \otimes \rho_{joint} 
  \right) 
  \right),
\tr_{\cB_t}\left(  \ket{\tilde{\psi}_{2t}^{\ket{x_0}\otimes \ket{0}}}  \right)
\right) \leq \epsilon + \sqrt{\gamma}
\end{align}

We now define the simulator for any input to be this exact simulator: 
\begin{align}\label{adversarial simulator}
\tilde{\sI}_t(\cdot) = \tilde{\sI}_t^{x_0, 0}(\cdot);
\end{align} 
In the remainder of the proof we show that $\tilde{\sI}_t(\cdot)$ satisfies an inequality similar to Eq.~\eqref{0 simulator final} with respect to all classical server inputs $x\in \{0,1\}^n$  
(not necessarily $x_0$)
and any input state $\ket{\alpha}\in \cB_0\otimes\cR$ for any $\cR$, as well as for a mixture of such $\alpha$'s; this would imply anchored privacy for the specious server. 
To this end we show that also for this input, a similar inequality to Eq.~\eqref{0 simulator final} holds (with a slightly worse bound).  
Define \[\ket{x\alpha_+} = \frac{1}{\sqrt{2}}\ket{0}_{\cR'}\ket{x_0}_{\cA_0}\ket{0}_{\cB_0,\cR} + \frac{1}{\sqrt{2}}\ket{1}_{\cR'}\ket{x}_{\cA_0}\ket{\alpha}_{\cB_0,\cR},\] where 
we have added an additional (control) qubit in the space $\cR'$. 
The specious adversary condition applies to this input state as well, and thus using the same derivation as for Eq.~\eqref{0 simulator part1}) we get: 
\begin{align}\label{y+ simulator part1}
\Delta\left(    
\ket{\tilde{\psi}_{2t}^{\ket{x\alpha_+}}}  ,
\left( \hat{\sF}_{2t}^\dagger \otimes \bbI \right)    \left(\ket{\sigma_{x\alpha_+}} \otimes   \ket{\psi_{2t}^{\ket{x\alpha_+}}} \right)
\right)\leq \sqrt{\gamma}
\end{align}

Using the fact that neither the server nor the client act on the $\cR'$ register, we get:

\begin{align} \label{superposed psi i}
\ket{\psi_{2t}^{\ket{x\alpha_+}}} 
= \frac{1}{\sqrt{2}}
\ket{0}_{\cR'}\otimes\ket{\psi_{2t}^{\ket{x_0}\otimes \ket{0}}}_{\cS,\cC,\cR}   + 
\frac{1}{\sqrt{2}}\ket{1}_{\cR'}\otimes \ket{\psi_{2t}^{\ket{x}\otimes \ket{\alpha}}}_{\cS,\cC,\cR}
\end{align}
Similarly, since the same is true for the adversarial run, we get:

\begin{align}\label{superposed tilde psi i}
\ket{\tilde{\psi}_{2t}^{\ket{x\alpha_+}}}
= 
\frac{1}{\sqrt{2}}
\ket{0}_{\cR'}\otimes\ket{\tilde{\psi}_{2t}^{\ket{x_0}\otimes \ket{0}}}_{\cS,\cC,\cR}
+   \frac{1}{\sqrt{2}}\ket{1}_{\cR'}\otimes \ket{\tilde{\psi}_{2t}^{\ket{x}\otimes \ket{\alpha}}}_{\cS',\cS,\cC,\cR}
\end{align}

We plug Eqs.~\eqref{superposed psi i} and \eqref{superposed tilde psi i} into Eq.~\eqref{y+ simulator part1}, and project the register $\cR'$ in the resulting state onto $\ket{1}_{\cR'}$ to get:

\begin{align}\label{y+ simulator part2}
\Delta\left(    
\frac{1}{\sqrt{2}}\ket{1}_{\cR'}\otimes \ket{\tilde{\psi}_{2t}^{\ket{x}\otimes \ket{\alpha}}}_{\cS,\cC} ,
\left( \hat{\sF}_{2t}^\dagger \otimes \bbI_{\cR, \cC} \right)    \left(  \frac{1}{\sqrt{2}}\ket{1}_{\cR'}\otimes\ket{\sigma_{x,\alpha_+}}_{\cS'}\otimes \ket{\psi_{2t}^{\ket{x}\otimes \ket{\alpha}}}_{\cS,\cC} \right)
\right)\leq \sqrt{\gamma}
\end{align}
Now we apply the fact that $\hat{\sF}_{2t}^{\dagger}$ doesn't act on the client's input; the fact that  a unitary operator doesn't change the distance between states; and the fact that tracing out doesn't increase that distance \cite{AKN98}, 
and Eq.~$\eqref{y+ simulator part2}$ 
becomes: 

\begin{align}\label{y+ simulator part3}
\Delta\left(    
\ket{\tilde{\psi}_{2t}^{\ket{x}\otimes \ket{\alpha} }}  ,
\left( \hat{\sF}_{2t}^\dagger \otimes \bbI \right)    \left(\ket{\sigma_{x\alpha_+}}\otimes\ket{\psi_{2t}^{\ket{x}\otimes \ket{\alpha} }} \right)
\right)\leq \sqrt{2\gamma}
\end{align}
Similarly, by projecting onto $\ket{0}_{\cR'}$ instead of $\ket{1}_{\cR'}$ in the derivation of $\ref{y+ simulator part2}$, we get 

\begin{align}\label{y+ simulator part4}
\Delta\left(    
\ket{\tilde{\psi}_{2t}^{\ket{x_0}\otimes \ket{0}}}  ,
\left( \hat{\sF}_{2t}^\dagger \otimes \bbI \right)    \left(\ket{\sigma_{x\alpha_+}}\otimes
\ket{\psi_{2t}^{\ket{x_0}\otimes \ket{0} }} 
\right)
\right)\leq \sqrt{2\gamma}
\end{align}
We now want to apply the triangle inequality to ~\eqref{y+ simulator part4}, using Eqs.~\eqref{0 simulator part1}. Applying yet again the same sequence of simple argument, namely 
the fact that unitary transformations preserve the trace distance and tracing out can only decrease it, we get
\begin{align}\label{y+ simulator part5}
\Delta\left(    
\ket{\sigma_{x_0,0}} ,
\ket{\sigma_{x\alpha_+}}
\right)\leq 2\sqrt{2\gamma}
\end{align}
And we can use Eq.~\eqref{y+ simulator part5} together with Eq.~\eqref{y+ simulator part3} to get:
\begin{align}\label{y+ simulator final}
\Delta\left(    
\ket{\tilde{\psi}_{2t}^{\ket{x}\otimes \ket{\alpha} }}  ,
\left( \hat{\sF}_{2t}^\dagger \otimes \bbI \right)    \left(\ket{\sigma_{x_0,0}}\otimes\ket{\psi_{2t}^{\ket{x}\otimes \ket{\alpha} }} \right)
\right)\leq 3\sqrt{2\gamma}
\end{align}

And finally combine Eq.~\eqref{y+ simulator final}, \eqref{0 simulator part2} and \eqref{adversarial simulator} (in a similar way to how we derived Eq.~\eqref{0 simulator final})  to get:
\begin{align}
\label{y simulator final1}
\Delta\left(
\tr_{\cB_0}\left(       \left( \tilde{\sI}_t   \otimes \mathbb{I}\right)   \left(\ket{x}\bra{x}\otimes\ket{\alpha}\bra{\alpha}\otimes \rho_{joint} \right) \right) ,
\tr_{\cB_t}\left(  \ket{\tilde{\psi}_t^{\ket{x}\otimes \ket{\alpha}}}  \right)
\right) \leq \epsilon + 3\sqrt{2\gamma}.
\end{align}

This finishes our proof. 
\end{proof}


\section{Linear Lower Bound in the Specious Model, Even with Prior Entanglement}
\label{sec:lb}
\newcommand{\rA}{\mathscr{A}}
\newcommand{\rB}{\mathscr{B}}
\newcommand{\rC}{\mathscr{C}}
\newcommand{\rD}{\mathscr{D}}
\newcommand{\rE}{\mathscr{E}}
\newcommand{\rF}{\mathscr{F}}

In this section we show that in the standard specious model, even allowing arbitrarily long prior entanglement, it is still impossible to achieve QPIR with sublinear communication. We do so by presenting a new lower bound argument based on an interactive leakage chain rule in~\cite{LC18c}, which allows us to establish linear lower bounds on both the server's communication complexity and the total communication complexity in a unified way. Then we observe that the lower bound on the server's communication complexity extends trivially to the case with arbitrary prior entanglement. In the following, we state some useful preliminaries in Section~\ref{sec:LB-background} and  present our lower bound in Section~\ref{subsec:lower-bound}.


\subsection{Quantum Information Theory Background} \label{sec:LB-background}

We first recall the notion of quantum min-entropy. Consider a bipartite quantum state $\rho_{AB}$.
The  quantum  min-entropy of $A$ conditioned on $B$ is defined as
\begin{align*}
H_{\min}(A|B)_{\rho} =  -\inf_{\sigma_{B}}
\left\{
\inf \left\{\lambda\in \mathbb{R}: \rho_{AB}\leq 2^\lambda I_{A}\otimes \sigma_B\right\}
\right\}.
\end{align*}
When $\rho_{AB}$ is a cq-state (i.e., the $A$ register is a classical state), the quantum min-entropy has a nice operational meaning in terms of guessing probability~\cite{KRS09}. Specifically, if $H_{\min}(A|B)_{\rho} = k$, then the optimal probability of predicting the value of $A$ given $\rho_B$  is exactly $2^{-k}$.

In the following, we state the interactive leakage chain rule in~\cite{LC18c}. Let $\rho = \rho_{AB}$ be a cq-state, that is, the system $A$ is classical while $B$ is quantum. The interactive leakage chain rule bounds how much the min-entropy $H_{\min}(A|B)_{\rho}$ can be decreased by an ``interactive leakage'' produced by applying a two-party protocol $\Pi =  \{\rA,\rB,\rho_{joint},s\}$ to $\rho$, where $A$ is treated as a classical input to $\rA$ and $B$ is given to $\rB$ as part of its initial state in $\rho_{joint}$. 

\begin{definition} Let $\rho = \rho_{AB}$ be a cq-state. Let $\Pi =  \{\rA,\rB,\rho_{joint},s\}$ be a two-party protocol where $\rho_{joint}$ contains $\rho_B$ in the $\rB_p$ system, and $\rho_{in}$ be an input state where the classical state $\rho_A$ is copied to $\cA_0$ as the input for $\rA$. (That is, $\cA_0$ has an initial state $\ket{0}_{A_0}$ and we do controlled NOT gates from $\rho_A$ to $\ket{0}_{A_0}$.) Consider the protocol execution $\left[\mathscr{A}\circledast\mathscr{B}\right](\rho_{in})$ and let $\sigma_{AB_s}$ be the final state where $A$ denotes the original classical state and $B_s$ denotes the final state of $\rB$. We say $\sigma_{B_s}$ is an \emph{interactive leakage} of $A$ produced by $\Pi$.
\end{definition}



\begin{theorem} \label{lemma:interactiveLCR2}
Let $\rho = \rho_{AB}$ be a cq-state. Let $\sigma_{AB_s}$ be the final state of a two-party protocol $\Pi =  \{\rA,\rB,\rho_{joint},s\}$ with certain input state $\rho_{in}$. Let $m_A$ and $m_B$ be  the communication complexity of $\rA$ and $\rB$, respectively. We have
	\begin{align}
	&H_{\min}(A|B_s)_{\sigma}\geq H_{\min}(A|B)_\rho -  \min\{ m_A+m_B, 2m_{A}\}, \label{eq:LCR}
	\end{align}
\end{theorem}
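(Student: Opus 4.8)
The plan is to establish the two inequalities $H_{\min}(A|B_s)_\sigma \geq H_{\min}(A|B)_\rho - 2m_A$ and $H_{\min}(A|B_s)_\sigma \geq H_{\min}(A|B)_\rho - (m_A+m_B)$ separately, and then keep the stronger (larger) of the two lower bounds, which is exactly $H_{\min}(A|B)_\rho - \min\{m_A+m_B, 2m_A\}$. Throughout I would use the following simplifications, all without loss of generality: purify the operations of $\mathscr{A}$ and $\mathscr{B}$ so that each step is an isometry and neither party ever discards a register (discarding only raises $H_{\min}(A|\cdot)$, so a bound for the no-discard version implies the general one), and purify the input cq-state $\rho_{AB}$ by a reference $R$ held by no one, so that the global state stays pure. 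The bookkeeping quantity is $h := H_{\min}(A|G)$, where $G$ denotes whatever register the client $\mathscr{B}$ currently holds. Initially $G=\cB_0\otimes\cB_p$, and since everything in this register other than $\rho_B$ is uncorrelated with the classical $A$, we have $h = H_{\min}(A|B)_\rho$ at the start, while $h = H_{\min}(A|B_s)_\sigma$ at the end.

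For the $2m_A$ bound I would track $h$ step by step. A local isometry by $\mathscr{A}$ does not touch $G$, and a local isometry by $\mathscr{B}$ acts only on $G$, so in both cases $h$ is unchanged. When $\mathscr{B}$ transmits a register $\cY_t$ to $\mathscr{A}$, the register $G$ shrinks, and weakening the conditioning register can only increase $H_{\min}(A|\cdot)$, so $h$ does not decrease. The only step at which $h$ can drop is when $\mathscr{A}$ sends $\cX_t$ to $\mathscr{B}$ and $G$ gains $\cX_t$; by the standard non-interactive leakage chain rule for quantum min-entropy (the one that loses $2\log\dim$ for each register added to the conditioning system), this drop is at most $2\log\dim(\cX_t)$. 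Summing over all of the server's messages gives a total decrease of at most $\sum_t 2\log\dim(\cX_t)=2m_A$.

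The second inequality is the crux, and note that it only improves on the first when $m_A > m_B$. The point is that the factor-two penalty for a message $\cX_t$ is only realized to the extent that $\cX_t$ is entangled with the server's private register: a ``clean'' encoding of information about $A$ into $\cX_t$ costs only $\log\dim(\cX_t)$, and the extra $\log\dim(\cX_t)$ is precisely a superdense-coding-type gain, where $\cX_t$ completes an entangled pair whose other half the client already holds. Since at the start the server holds only a classical copy of $A$ together with input-independent ancillas, every such exploitable entangled pair must have been ``seeded'' by a qubit the server previously received from the client, so the total excess over all of the server's messages is at most $m_B$. I would formalize this with a potential $\Phi_t := h_t + b_t$, where $b_t$ is the number of qubits $\mathscr{B}$ has sent up to step $t$ (possibly with an additional correction term measuring the entanglement currently shared between the two sides), together with a refined single-step leakage bound showing that at each server message $\Phi$ decreases by at most $\log\dim(\cX_t)$ rather than $2\log\dim(\cX_t)$, while client$\to$server messages and local operations never decrease $\Phi$. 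Then $\Phi$ decreases by at most $m_A$ in total, and since $\Phi_{\mathrm{start}} = H_{\min}(A|B)_\rho$ and $\Phi_{\mathrm{end}} = H_{\min}(A|B_s)_\sigma + m_B$, this yields $H_{\min}(A|B_s)_\sigma \geq H_{\min}(A|B)_\rho - (m_A+m_B)$. The hard part, where I expect essentially all of the work to lie, is proving this refined single-step chain rule and making the ``entanglement budget $\le m_B$'' accounting rigorous — in particular choosing the correction term so that the potential is genuinely monotone except at the server's messages.
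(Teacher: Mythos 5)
The paper does not actually prove this statement: Theorem~\ref{lemma:interactiveLCR2} is imported wholesale from~\cite{LC18c}, so there is no internal proof to compare against and your attempt has to stand on its own. Its first half does: purifying both parties, the quantity $h=H_{\min}(A|G)$ (with $G$ the client's current holdings) is invariant under local isometries of either party, cannot decrease when the client sends a register away, and drops by at most $2\log\dim(\cX_t)$ when a server message is adjoined to $G$, by the standard non-interactive leakage chain rule for quantum min-entropy; summing over the server's messages gives the $2m_A$ bound, and this argument is insensitive to prior entanglement, which is exactly the feature the paper exploits when extending its lower bound to arbitrary $\rho_{joint}$.

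The second half, however, has a genuine gap: the ``refined single-step leakage bound'' for your potential $\Phi_t=h_t+b_t$ (plus an unspecified entanglement correction) \emph{is} the novel content of the $m_A+m_B$ bound, and you neither formulate it precisely nor prove it. Without a correction term the per-step claim is simply false: let the client first send $k$ halves of EPR pairs ($b$ grows by $k$, $h$ does not move) and let the server then superdense-code $2k$ bits of $A$ into a $k$-qubit message; at that server step $h$ drops by $2k$ while $b$ is unchanged, so $\Phi$ drops by $2k>\log\dim(\cX_t)$. Hence everything hinges on the correction term you leave open: one must pick an entanglement measure $E$ across the current cut that is compatible with \emph{one-shot} min-entropy (log Schmidt rank, max-entanglement, or similar), prove that a server message decreases $h$ by at most its length plus the decrease in $E$, and that any transmitted qubit changes $E$ by at most one --- none of these is an off-the-shelf fact at the min-entropy level, and that is where essentially all of the work in~\cite{LC18c} lies. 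Finally, your accounting implicitly assumes that $\rho_{joint}$ contains no server--client entanglement (your ``seeded by a qubit the server previously received'' remark); this is necessary, since with pre-shared entanglement the $m_A+m_B$ part fails outright (superdense coding with $m_B=0$), and it should appear as an explicit hypothesis --- consistent with the fact that the paper invokes the theorem with $\rho_{joint}=\ket{0}\bra{0}$ and retains only the $2m_A$-type bound for general setup.
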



We will also use the following lemma about gentle measurement, which is first proved by Winter \cite{Win99}
and improved by Ogawa and Nagaoka~\cite{ON07},  and is also referred to as the almost-as-good-as-new Lemma by Aaronson~\cite{Aar04}.
It says that the post-measurement state of an almost-sure measurement will remain close to its original.
The following version is taken from Wilde's book~\cite{Wilde13}.
\begin{lemma}  \label{lemma:gentle}
 Suppose $0\leq \Lambda\leq I$ is a measurement operator such that for a mixed state $\rho$,
 \[\tr\left(\Lambda \rho\right)\geq 1 - \epsilon.\]
 Then  the post-measurement state $\tilde{\rho}$  is $\sqrt{\epsilon}$-close to the original state $\rho$:$$||\tilde{\rho}-\rho||_{\tr}\leq \sqrt{\epsilon}.$$
 \end{lemma}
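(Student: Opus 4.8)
\medskip
\noindent\textbf{Proof plan for Lemma~\ref{lemma:gentle}.} The plan is to pass to a purification, bound a fidelity using the operator inequality $\sqrt{\Lambda}\ge\Lambda$, and then convert back to trace distance via a Fuchs--van de Graaf inequality. Concretely, I take $\sqrt{\Lambda}$ as the Kraus operator of the ``accept'' outcome, so the (normalized) post-measurement state is $\tilde{\rho}=\sqrt{\Lambda}\,\rho\,\sqrt{\Lambda}/\tr(\Lambda\rho)$, and I fix a purification $\ket{\psi}_{RS}$ of $\rho$ on an auxiliary system $R$. Then $(\mathbb{I}_R\otimes\sqrt{\Lambda})\ket{\psi}$ is an (unnormalized) purification of $\sqrt{\Lambda}\rho\sqrt{\Lambda}$, and its normalization $\ket{\tilde\psi}$ purifies $\tilde\rho$ — the normalizing constants on the two sides match because $\tr(\sqrt{\Lambda}\rho\sqrt{\Lambda})=\tr(\Lambda\rho)$.

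Next I would estimate the overlap $\langle\psi|\tilde\psi\rangle$. Writing $p:=\tr(\Lambda\rho)=\bra{\psi}(\mathbb{I}_R\otimes\Lambda)\ket{\psi}\ge 1-\epsilon$, which also equals $\|(\mathbb{I}_R\otimes\sqrt{\Lambda})\ket{\psi}\|^2$, and using the operator inequality $\sqrt{\Lambda}\ge\Lambda$ (apply the scalar inequality $\sqrt{x}\ge x$ on $[0,1]$ to the spectrum of $\Lambda$), I get $\bra{\psi}(\mathbb{I}_R\otimes\sqrt{\Lambda})\ket{\psi}\ge p$, hence
\[
\langle\psi|\tilde\psi\rangle \;=\; \frac{\bra{\psi}(\mathbb{I}_R\otimes\sqrt{\Lambda})\ket{\psi}}{\sqrt{p}}\;\ge\;\sqrt{p}\;\ge\;\sqrt{1-\epsilon}.
\]
Since fidelity cannot decrease under the partial trace over $R$, this yields $F(\rho,\tilde\rho)\ge|\langle\psi|\tilde\psi\rangle|\ge\sqrt{1-\epsilon}$.

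Finally I would invoke the Fuchs--van de Graaf bound $D(\rho,\sigma)\le\sqrt{1-F(\rho,\sigma)^2}$ (with $D$ the trace distance) to conclude $\|\tilde\rho-\rho\|_{\tr}\le\sqrt{1-F(\rho,\tilde\rho)^2}\le\sqrt{\epsilon}$, as claimed. I expect no serious obstacle: the only point deserving a careful sentence rather than being purely mechanical is the bookkeeping that makes $\ket{\tilde\psi}$ a genuine normalized purification of $\tilde\rho$ (so that monotonicity of fidelity applies cleanly); the facts $\sqrt{\Lambda}\ge\Lambda$ and the fidelity/trace-distance conversion are textbook, and this is in fact precisely the argument given in Wilde's book, from which the statement is quoted.
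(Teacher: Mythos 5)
Your argument is correct and is exactly the standard proof of the gentle measurement lemma: the paper itself gives no proof, citing Winter/Ogawa--Nagaoka/Aaronson and quoting the statement from Wilde's book, whose proof is the purification--Uhlmann--Fuchs--van de Graaf argument you reproduce. Note only that the final bound $\sqrt{\epsilon}$ relies on the paper's convention $\|X\|_{\tr}=\tfrac12\tr\sqrt{X^{\dag}X}$ (i.e.\ trace distance with the factor $\tfrac12$), which is indeed how the paper defines it, so your conversion $\Delta(\rho,\tilde\rho)\le\sqrt{1-F^2}\le\sqrt{\epsilon}$ is consistent.
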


We will also need the following lemma, which can be proved by a standard argument using Uhlmann theorem and the Fuchs and van~de Graaf~inequality~\cite{FvdG99} (for a proof, see, e.g.,~\cite{BB15}).


 \begin{lemma}\label{lemma:lo_attack}
 Suppose $\rho_A$, $\sigma_A\in\cA$ are two quantum states with  purifications $\ket{\phi}_{AB}$, $\ket{\psi}_{AB}\in \cA\otimes\cB$, respectively, and $||\rho_A-\sigma_A||_{\tr}\leq \epsilon$. Then there exists a unitary $U_B\in L(\cB)$ such that
 $$||\ket{\phi}_{AB}-I_A\otimes U_B\ket{\psi}_{AB}||_{\tr}\leq \sqrt{\epsilon(2-\epsilon)}.$$
 \end{lemma}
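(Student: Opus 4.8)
The plan is to obtain this from Uhlmann's theorem together with the two Fuchs--van~de~Graaf inequalities recalled above; this is precisely the ``standard argument'' the statement alludes to (it is the pure-state form underlying Lo's attack). Throughout, the quantity $\|\ket{\phi}_{AB}-I_A\otimes U_B\ket{\psi}_{AB}\|_{\tr}$ should be read as the trace distance between the density operators $\ketbra{\phi}$ and $(I_A\otimes U_B)\ketbra{\psi}(I_A\otimes U_B)^{\dagger}$, and we may assume $\epsilon\le 1$, which is the only case we need and the only regime in which the bound is meaningful. The first step is to pass from trace distance to fidelity: by the left-hand Fuchs--van~de~Graaf inequality, $\|\rho_A-\sigma_A\|_{\tr}\le\epsilon$ implies $F(\rho_A,\sigma_A)\ge 1-\epsilon\ge 0$, where $F(\rho,\sigma)=\tr\sqrt{\rho^{1/2}\sigma\rho^{1/2}}$.

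Next I would apply Uhlmann's theorem in the form that says that, with the purifications $\ket{\phi}_{AB}$ and $\ket{\psi}_{AB}$ held fixed,
\[
F(\rho_A,\sigma_A)\;=\;\max_{U_B}\ \bigl|\bra{\phi}\,(I_A\otimes U_B)\,\ket{\psi}\bigr|,
\]
the maximum being over unitaries $U_B$ acting on $\cB$ (every purification of $\sigma_A$ living in $\cA\otimes\cB$ is of the form $(I_A\otimes U_B)\ket{\psi}$). Choose $U_B$ attaining this maximum and, after absorbing a global phase into $U_B$, assume that $\bra{\phi}(I_A\otimes U_B)\ket{\psi}\ge 1-\epsilon$ is real and nonnegative. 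Put $\ket{\psi'}:=(I_A\otimes U_B)\ket{\psi}$, a unit vector with $\langle\phi|\psi'\rangle\ge 1-\epsilon$.

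To finish, I would use that the trace distance between two pure states with overlap of modulus $c$ is exactly $\sqrt{1-c^2}$ (the equality case of the right-hand Fuchs--van~de~Graaf inequality applied to $\ketbra{\phi}$ and $\ketbra{\psi'}$), so that
\[
\bigl\|\ketbra{\phi}-\ketbra{\psi'}\bigr\|_{\tr}\;=\;\sqrt{1-|\langle\phi|\psi'\rangle|^2}\;\le\;\sqrt{1-(1-\epsilon)^2}\;=\;\sqrt{\epsilon(2-\epsilon)},
\]
where the inequality uses that $1-\epsilon\in[0,1]$ and that $t\mapsto\sqrt{1-t^2}$ is non-increasing on $[0,1]$. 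This is the claimed inequality for the chosen $U_B$, so the proof is complete.

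I do not anticipate a real obstacle: the only places that need mild care are quoting Uhlmann's theorem in the ``fixed-purifications'' form (as opposed to the variant that also optimizes over the purifications), and the bookkeeping with the global phase and the elementary monotonicity step in the last display. I would include these steps chiefly so that the origin of the constant $\sqrt{\epsilon(2-\epsilon)}$ is transparent; everything else is routine, which is why the lemma is simply cited as following from a standard argument.
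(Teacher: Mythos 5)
Your proposal is correct and is exactly the ``standard argument'' the paper invokes: Fuchs--van~de~Graaf to get $F(\rho_A,\sigma_A)\ge 1-\epsilon$, Uhlmann's theorem in the fixed-purifications form to pick the optimal $U_B$, and the exact pure-state formula $\sqrt{1-|\langle\phi|\psi'\rangle|^2}$ to conclude $\sqrt{\epsilon(2-\epsilon)}$. The paper does not spell this out (it cites the argument and refers to Baumeler--Broadbent for details), so there is nothing further to reconcile.
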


 \subsection{Our Lower Bound} \label{subsec:lower-bound}

\begin{theorem} \label{thm:lower-bound}
Let $\Pi=\{\rA,\rB,\rho_{joint}=\ket{0}\bra{0},s\}$ be a  QPIR protocol for the server's database of size $n$. Suppose $\Pi$ is  $(1-\delta)$-correct and  $\epsilon$-private against $\gamma$-specious servers with $\delta \leq n^{-4}/100, \epsilon \leq n^{-8}/100$.  Then the server's communication complexity is at least $(n - 1)/2$ and the total communication complexity is at least $n - 1$.
\end{theorem}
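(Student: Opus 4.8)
To prove Theorem~\ref{thm:lower-bound}, the plan is to sandwich a single conditional min‑entropy between two incompatible bounds. Run $\Pi$ with the server's database a uniformly random $\vc{a}\in\zo^n$, stored in a classical reference register $A$ a copy of which is handed to the server as its input, and with the client's index fixed (say, to $1$); let $\sigma$ denote the state at the end of the execution, with $\cA_s,\cB_s$ the server's and client's final registers. We may assume $\Pi$ is measurement‑free, since deferring all measurements (purifying every party operation) affects neither the communication complexity nor the reduced state on $\cB_s$. Before the execution the client's side is independent of $\vc{a}$, so $H_{\min}(A|B)_{\rho}=n$ for the client's initial register $B$. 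I will show that $H_{\min}(A|\cB_s)_{\sigma}$ is forced to be large by the communication being small, and small by privacy and correctness, and that the two bounds collide unless the communication is linear.

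For the first bound, apply the interactive leakage chain rule (Theorem~\ref{lemma:interactiveLCR2}), viewing the whole execution as an interactive leakage of the classical register $A$ to the client:
\[
H_{\min}(A|\cB_s)_{\sigma}\ \ge\ H_{\min}(A|B)_{\rho}-\min\{\,m_A+m_B,\ 2m_A\,\}\ =\ n-\min\{\,m_A+m_B,\ 2m_A\,\}.
\]
By the operational meaning of quantum min‑entropy for cq‑states, no measurement performed on $\cB_s$ recovers $\vc{a}$ with probability exceeding $2^{-(n-\min\{m_A+m_B,\,2m_A\})}$.

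For the second bound, I would carry out the Lo‑type attack from the client's side. Since the honest server is $0$‑specious, hence $\gamma$‑specious, $\epsilon$‑privacy (Definition~\ref{def:qpir}) supplies a simulator that reproduces the honest server's view up to trace distance $\epsilon$ from the server's input $\vc{a}$ alone; consequently the server's final reduced state differs by at most $2\epsilon$ in trace distance between the index‑$1$ run and any index‑$j$ run. Feeding the corresponding purified final states to Lemma~\ref{lemma:lo_attack} yields, for each $j$, a client‑local unitary $V_j$ that maps the client's actual index‑$1$ final state to within $2\sqrt{\epsilon(1-\epsilon)}$ of the index‑$j$ final state, from which $(1-\delta)$‑correctness extracts $a_j$ with probability $\ge 1-\delta$. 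Running these recoveries sequentially for $j=1,\dots,n$ — apply $V_j$, measure $a_j$, undo $V_j$, proceed — and controlling the cumulative post‑measurement disturbance via the gentle‑measurement lemma (Lemma~\ref{lemma:gentle}), which converts a per‑step failure of $\delta+2\sqrt{\epsilon(1-\epsilon)}$ into a disturbance of $\sqrt{\delta+2\sqrt{\epsilon(1-\epsilon)}}$, one reconstructs all of $\vc{a}$ from $\sigma_{\cB_s}$ with probability at least $p\ \ge\ 1-n^2\sqrt{\delta+2\sqrt{\epsilon(1-\epsilon)}}$; the quadratic factor $n^2$ is the union bound over the $n$ extractions against a disturbance accruing linearly in the steps already taken. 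Hence $H_{\min}(A|\cB_s)_{\sigma}\le\log(1/p)$.

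Combining the two bounds gives $\min\{m_A+m_B,\ 2m_A\}\ \ge\ n-\log(1/p)$. Substituting $\delta\le n^{-4}/100$ and $\epsilon\le n^{-8}/100$ makes $n^2\sqrt{\delta+2\sqrt{\epsilon(1-\epsilon)}}\le 1/2$, so $p\ge 1/2$ and $\log(1/p)\le 1$, whence $2m_A\ge n-1$ and $m_A+m_B\ge n-1$, as claimed. The bound on the server's communication $m_A$ carries over verbatim to protocols with arbitrary prior shared entanglement $\ket{\Phi^+}^{\otimes m}$: the client may prepare that state itself and transmit the server's half at the outset, which only enlarges $m_B$ and leaves the branch $2m_A\ge n-1$ of the chain‑rule bound untouched. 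I expect the main obstacle to be the second bound: assembling the $n$ sequential ``switch index and decode'' steps into a single legitimate measurement on $\sigma_{\cB_s}$, and tracking how the $2\sqrt{\epsilon(1-\epsilon)}$ errors of the Lo unitaries, the $\delta$ errors of the correctness measurements, and the gentle‑measurement disturbance compound across rounds, so that the reconstruction succeeds with probability bounded below uniformly in $\vc{a}$ and therefore genuinely lower‑bounds the guessing probability $2^{-H_{\min}(A|\cB_s)_{\sigma}}$.
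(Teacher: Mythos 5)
Your proposal is correct and follows essentially the same route as the paper's proof: purify the protocol, lower-bound $H_{\min}(A|\cB_s)$ via the interactive leakage chain rule, and upper-bound it by a sequential Lo-style decoding attack (privacy plus Lemma~\ref{lemma:lo_attack} giving client-local unitaries $U^{1\to i}$, correctness measurements, and gentle measurement to control cumulative disturbance), yielding the same success probability $1-n^2\sqrt{\delta+2\sqrt{\epsilon(1-\epsilon)}}\ge 1/2$ and the same final bounds. The ``main obstacle'' you flag at the end is resolved exactly as you sketch, by the inductive bound $\Delta(\tilde\sigma_B^k,\sigma_B^1)\le k\sqrt{\delta+\epsilon'}$ on the post-measurement states.
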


In the above theorem, we consider protocols with no prior setup, i.e., $\rho_{joint}=\ket{0}\bra{0}$. We observe that the lower bound for the server's communication complexity extends  for general $\rho_{joint}$, since one can think of $\rho_{joint}$ as prepared by the client, who sends the server's initial state to the server at the beginning of the protocol. This simple reduction does not increase the server's communication complexity and extends the lower bound on the server's communication complexity  for arbitrary $\rho_{joint}$.

\begin{corollary}
Let $\Pi=\{\rA,\rB,\rho_{joint},s\}$ be a  QPIR protocol for the server's database of size $n$ with arbitrary $\rho_{joint}$. Suppose $\Pi$ is  $(1-\delta)$-correct and  $\epsilon$-private against $\gamma$-specious servers with $\delta \leq n^{-4}/100, \epsilon \leq n^{-8}/100$.  Then the server's communication complexity is at least $(n - 1)/2$.
\end{corollary}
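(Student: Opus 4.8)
\medskip
\noindent\textbf{Proof plan.} The corollary follows from Theorem~\ref{thm:lower-bound} by a setup-elimination reduction, so the plan is to prove that theorem and then invoke the reduction. For the reduction: given $\Pi=\{\mathscr{A},\mathscr{B},\rho_{joint},s\}$ with an arbitrary shared state $\rho_{joint}\in\cA_p\otimes\cB_p$, I would consider the protocol $\Pi'$ in which the honest client first prepares $\rho_{joint}$ locally and sends the $\cA_p$ register to the server as an extra opening message, after which the parties run $\Pi$ verbatim. Then $\Pi'$ has trivial prior state, has exactly the same server communication complexity as $\Pi$ (only the client's communication grows), and inherits $(1-\delta)$-correctness and $\epsilon$-privacy against $\gamma$-specious servers, since a specious adversary for $\Pi'$ is obtained from one for $\Pi$ by absorbing the opening message into the initial register. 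Applying Theorem~\ref{thm:lower-bound} to $\Pi'$ bounds the server's communication of $\Pi'$, hence of $\Pi$, from below by $(n-1)/2$.

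To prove Theorem~\ref{thm:lower-bound} itself, I would let the server hold a uniformly random database $\vc{a}\in\{0,1\}^n$ and run $\Pi$ with some fixed client index, and let $\rho$ and $\sigma_{AB_s}$ denote the initial and final joint states, with $A$ the classical register holding $\vc{a}$ and $B_s$ the client's final register. Before the protocol $\vc{a}$ is independent of the client's view, so $H_{\min}(A|B)_{\rho}=n$; the execution is an interactive leakage of $A$, so Theorem~\ref{lemma:interactiveLCR2} gives $H_{\min}(A|B_s)_{\sigma}\ge n-\min\{m_A+m_B,\,2m_A\}$. The heart of the argument is to show that the honest client can recover the \emph{entire} string $\vc{a}$ from $B_s$ with probability $p>1/2$; by the operational meaning of conditional min-entropy for cq-states this forces $H_{\min}(A|B_s)_{\sigma}\le\log(1/p)<1$, and combining the two inequalities gives $2m_A>n-1$ and $m_A+m_B>n-1$, which is exactly the theorem.

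For the reconstruction I would proceed as in the Lo-style attacks. Since the honest server is $0$-specious, $\epsilon$-privacy supplies a simulator whose output is $\epsilon$-close to the server's view after round $s-1$ and is independent of the client's index; hence for any two indices the server's reduced state at that point is $2\epsilon$-close, and Lemma~\ref{lemma:lo_attack} yields a client-local unitary relating the two global states up to $O(\sqrt{\epsilon})$ in trace distance. The client then runs $\Pi$ honestly through round $s-1$ with a reference index, keeps the (now essentially index-oblivious) last server message, and for each $j\in\{1,\dots,n\}$: applies the rotation into the ``index $j$'' state, completes the final round locally, applies the correctness measurement $\mathcal{M}$ to read a guess for $\vc{a}_j$ (correct with probability $\ge 1-\delta-O(\sqrt{\epsilon})$), and---because $\mathcal{M}$ succeeds with probability close to $1$---uses Lemma~\ref{lemma:gentle} to reverse the measurement, the last round, and the rotation with only $O(\sqrt{\delta})$ additional disturbance before moving on to $j+1$. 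Summing over $j$, the accumulated error is at most $1-p\le n^2\sqrt{\delta+2\sqrt{\epsilon(1-\epsilon)}}$, and the hypotheses $\delta\le n^{-4}/100$, $\epsilon\le n^{-8}/100$ keep this below $1/2$.

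The step I expect to be the main obstacle is making this error accounting rigorous across the $n$ reconstruction rounds, together with the coupled subtlety that Definition~\ref{def:qpir} only provides simulators for $t\le s-1$: the rotation must be applied at round $s-1$, and one must verify that completing the protocol from there (the last server message $\cX_s$ and the client map $\mathscr{B}_s$) does not reintroduce index dependence---which is true precisely because privacy makes the server's post-round-$(s-1)$ state, and therefore $\cX_s$, essentially index-independent, leaving $\mathscr{B}_s$ a client-local operation. The remaining pieces---the chain-rule invocation, the min-entropy/guessing-probability correspondence, and checking that the $\Pi\to\Pi'$ reduction preserves speciousness and privacy---are routine.
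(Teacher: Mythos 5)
Your proposal is correct and follows essentially the same route as the paper: the corollary is obtained by the same client-prepares-the-setup reduction (which leaves the server's communication unchanged), and the underlying theorem by the same combination of the interactive leakage chain rule, the operational meaning of conditional min-entropy, and a Lo-style reconstruction using Uhlmann unitaries with iterated gentle measurement. Your choice to anchor the Uhlmann rotation at round $s-1$ (because the privacy definition's simulators stop there) rather than conjugating the correctness measurement at the final state, as the paper does, is only a bookkeeping variant --- the two are equivalent since the server's final map is fixed and acts on registers disjoint from the client's rotation.
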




We now prove Theorem~\ref{thm:lower-bound}.

\begin{proof} To establish communication complexity lower bound for $\Pi$, we consider a purified version $\bar{\Pi}=\{\bar{\rA},\bar{\rB},\rho_{joint},s\}$ of $\Pi$, where both parties' operations are purified. Specifically, $\bar{\rA}$ is modified from $\rA$, where the sequence of quantum operations $\bar{\mathscr{A}}_1,\dots,\bar{\mathscr{A}}_s$ are unitaries
  \begin{align*}
 \bar{\mathscr{A}}_1:& L(\cA_0\otimes \bar{\cA}_0)\rightarrow L(\cA_1\otimes \bar{\cA}_1 \otimes \cX_1),\\
 \bar{\mathscr{A}}_t:& L(\cA_{t-1}\otimes \bar{\cA}_{t-1}\otimes \cY_{t-1})\rightarrow L(\cA_{t}\otimes \bar{\cA}_t\otimes \cX_{t}), t=2,\dots,s;
 \end{align*}
$\bar{\cA}_0$ is  of sufficiently large dimension and  initialized to $\ket{0}$; 
  $\bar{\cA}_t$ are called purifying spaces and
\[ \tr_{\bar{\cA}_t} (\bar{\rho}_t(\rho_{in}) )=\rho_t(\rho_{in}) \]
for all $\rho\in \cA_0\otimes \cB_0$.
The purified  $\bar{\mathscr{B}}$ for $\mathscr{B}$
is similarly defined.

By inspection, it is easy to verify that $\bar{\Pi}$ preserves the properties of $\Pi$, i.e., $\bar{\Pi}$ is also $(1-\delta)$-correct,  $\epsilon$-private against $\gamma$-specious servers, and has the same communication complexity as $\Pi$.  Thus, communication complexity lower bound for $\bar{\Pi}$ implies that for $\Pi$. Also, note that $\bar{\rA}$ is a $0$-specious adversary for $\Pi$.

Now, let us consider an experiment that first samples a uniformly random database $\vc{a} \in \zo^n$, and use $\vc{a}$ as the server's database to run the protocol $\bar{\Pi}$ with an arbitrary fixed input of the client. Note that execution of the protocol  can be viewed as producing an interactive leakage of $\vc{a}$. Let $\rho_{AB}$ denote the final state where system $A$ denotes the input $\vc{a}$ and system $B$ has the client's final local state. By Theorem~\ref{lemma:interactiveLCR2}, we have
$$H(A|B)_\rho \geq H(A)_\rho - \min \{2 m_A, m_A + m_B\},$$
where $m_A, m_B$ denote the server and the client's communication complexities, respectively. The operational meaning of min-entropy says that given the client's state $\rho_B$, one cannot guess the random database $\vc{a}$ correctly with probability higher than $2^{-(H(A)_\rho - \min \{2 m_A, m_A + m_B\})}$. To derive a lower bound on the communication complexity, we show a strategy to predict the database $\vc{a}$ with probability at least $1- n^2 \sqrt{\delta+ 2\sqrt{\epsilon(1-\epsilon)}} > 1/2$, which gives the desired lower bound.



Let $ \sigma_{B}^i=\tr_{A}[\bar{\rA}\circledast\bar{\rB}](\ket{\vc{a}}\bra{\vc{a}}_{A_0}\otimes \ket{i}\bra{i}_{B_0})$
and $ \sigma_{A}^i=\tr_{B}[\bar{\rA}\circledast\bar{\rB}](\ket{\vc{a}}\bra{\vc{a}}_{A_0}\otimes \ket{i}\bra{i}_{B_0})$.

By the definition of privacy, there exists a quantum operation $\rF$ such that
\begin{align}
\Delta\left( \tr_{B_0} \rF_0 \otimes I_{\bar{B}_0} \left(\rho_{in}^1\right),  \sigma_A^1 \right)\leq \epsilon.
\end{align}
Since  $ \tr_{B_0} \rF_0 \otimes I_{\bar{B}_0} \left(\rho_{in}^1\right)= \tr_{B_0} \rF_0 \otimes I_{\bar{B}_0} \left(\rho_{in}^i\right)$ for all $i$,
\begin{align}
\Delta\left(  \tr_{B_0} \rF_0 \otimes I_{\bar{B}_0} \left(\rho_{in}^1\right)-\sigma_A^i\right)\leq \epsilon
\end{align}
We have, by triangle inequality,
\[
\Delta\left( \sigma_A^1-\sigma_A^i\right)\leq 2\epsilon.
\]
for all $i$.

By Lemma~\ref{lemma:lo_attack},  we have
\begin{align}
\Delta\left( I_A\otimes U_B^{1\rightarrow i}\ket{\psi^1}_{\bar{A}\bar{B}}, \ket{\psi^i}_{\bar{A}\bar{B}}\right) \leq 2\sqrt{\epsilon(1-\epsilon)}\triangleq \epsilon', \label{eq:26}
\end{align}
where $\ket{\phi}_{\bar{A}\bar{B}}$ and  $\ket{\psi^i}_{\bar{A}\bar{B}}$ are {purifications} of $\sigma_A^1$ and $\sigma_A^i$, respectively.

By the definition of correctness error, there exists measurement $\cM_i$ such that
\begin{align*}
& \Pr\left\{ \cM_i\left( \sigma_B^i \right)=a_i \right\}\geq 1-\delta.
\end{align*}
Let $$\cM_i'= \left(U_B^{1\rightarrow i} \right)^\dag \cM_i U_B^{1\rightarrow i}$$ for $i=2,\dots,n$. Thus we have by Eq.~(\ref{eq:26})
\begin{align}
& \Pr\left\{ \cM_i'\left( \sigma_B^1 \right)=a_i \right\}\geq 1-\delta-\epsilon'. \label{eq:9}
\end{align}
By Lemma~\ref{lemma:gentle}, the client can recover $\tilde{\sigma}^{(i)}_B$ such that
\begin{align}
\Delta\left(\tilde{\sigma}^{(i)}_B,\sigma_B^1\right)\leq \sqrt{\delta+\epsilon'}. \label{eq:10}
\end{align}

Now we construct a protocol for the client to learn all the bits $\vc{a}=a_1,\dots,a_n$.
First the client chooses input $\ket{1}\bra{1}$. Then he plays the protocol $\bar{\Pi}$ with Alice and obtains $\sigma_B^1$.
Measuring $\sigma_B^1$ by $\cM_1$, the client gets $a_1$ with probability at least $1-\delta$.
By Lemma~\ref{lemma:gentle}, the client can recover $\tilde{\sigma}_B^1$ such that
\[
\Delta\left(\tilde{\sigma}^1_B,\sigma_B^1\right)\leq \sqrt{\delta}.
\]
Then the client measures $\cM_2'$ on $\tilde{\sigma}_B^1$ and then recovers $\tilde{\sigma}_B^2$. Continue this process and
$\tilde{\sigma}_B^k$ will be the state recovered from applying $\cM_k'$ to $\tilde{\sigma}_B^{k-1}$.
We claim that \begin{align}
\Delta\left( \tilde{\sigma}_B^k,\sigma_B^1 \right) \leq k \sqrt{\delta+\epsilon'}. \label{eq:11}
\end{align}
Suppose this is true for $i=2, \cdots, k$.
If we measure $\cM_{k+1}'$ on $\tilde{\sigma}_B^{k+1}$ and on $\sigma_B^1$, respectively,  and recover $\tilde{\sigma}_B^{k+1}$  and $\tilde{\sigma}_B^{(k+1)}$, respectively,
we have  \begin{align}
\Delta\left(\tilde{\sigma}_B^{k+1},\tilde{\sigma}_B^{(k+1)} \right) \leq \Delta\left( \tilde{\sigma}_B^k,\sigma_B^1 \right) \leq k \sqrt{\delta+\epsilon'}      \label{eq:12}
\end{align}
where the first inequality is because quantum operations do not increase trace distance. Now use the triangle inequality with Eqs.~(\ref{eq:10}) and (\ref{eq:12}), and the claim follows by induction.

By Eqs.~(\ref{eq:9}) and  (\ref{eq:11}), the probability of recovering $a_i$ by measuring $\cM_i'$ on $\tilde{\sigma}_B^{i-1}$ is at least $1- i\sqrt{\delta+\epsilon'}$.
Therefore, the client learns $\vc{a}$ with probability at least
\[
\prod_{i=1}^n   \left(1- i\sqrt{\delta+\epsilon'}\right)\geq 1- n^2 \sqrt{\delta +\epsilon'},
\]
which is what we need to complete the proof. 
\end{proof}

\paragraph{Acknowledgments}
We thank the anonymous referees for presenting us with the work of Kerenidis et al.~\cite{KLGR16}, and other valuable comments. 
ZB~is supported by the Israel Science Foundation (Grant  468/14), Binational Science Foundation (Grants 2016726, 2014276), and by the European Union Horizon 2020 Research and Innovation Program via ERC Project REACT (Grant 756482) and via Project PROMETHEUS (Grant 780701).
OS~is supported by ERC Grant 280157, by the Israel Science Foundation (Grant 682/18), and by the Cyber Security Research Center at Ben-Gurion University.
CYL~is financially supported from the Young Scholar Fellowship Program by Ministry of Science and Technology
(MOST) in Taiwan, under Grant MOST107-2636-E-009-005. KMC is partially supported by 2016 Academia Sinica Career
Development Award under Grant No. 23-17 and the Ministry of Science and Technology, Taiwan under Grant No. MOST
103-2221- E-001-022-MY3.
DA and AG were supported by ERC Grant 280157 for part of the work on this project, and are supported by the Israel Science Foundation (Grant 1721/17).

\bibliographystyle{alphaabbrurldoieprint}
\bibliography{QPIR}

\appendix

\section{Hilbert Spaces and Quantum States}
\label{sec:stdprelim}

The Hilbert space of a quantum system $A$  is denoted by the corresponding calligraphic letter  $\cA$
and its dimension is denoted by   $\dim(\cA)$.
Let $L(\cA)$ be the space of linear operators on $\cA$.
A  quantum state of system $A$ is described by a \emph{density operator} $\rho_A\in L(\cA)$ that is positive semidefinite  and with unit trace $(\tr(\rho_A)=1)$. 
Let $S(\cA)= \{ \rho_A\in L(\cA): \rho_A\geq 0, \tr(\rho_A)=1\}$ be the set of density operators on  $\cA$.
When $\rho_A\in S(\cA)$ is of rank one, it is called a \emph{pure} quantum state and we can write $\rho=\ket{\psi}\bra{\psi}_A$ for some unit vector $\ket{\psi}_A\in \cA$,
where $\bra{\psi}=\ket{\psi}^{\dag}$ is the conjugate transpose of $\ket{\psi}$. If  $\rho_A$ is not pure, it is called a \emph{mixed} state and can be expressed as a convex combination of pure quantum states.

The Hilbert space of a joint quantum system $AB$ is the tensor product of the corresponding Hilbert spaces $\cA\otimes \cB$.
For $\rho_{AB}\in S(\cA\otimes \cB)$, its reduced density operator in system $A$
is $\rho_A=\tr_B(\rho_{AB})$, 
where
\[
\tr_B(\rho_{AB})= \sum_{i} I_A\otimes\bra{i}_B \left(\rho_{AB} \right)I_A\otimes \ket{i}_B
\]
for an orthonormal basis $\{\ket{i}_B\}$ for $\cB$.
We sometimes use the equivalent notation, 
\[\rho_{AB}|_A:=tr_B(\rho_{AB}).\]

Suppose $\rho_A\in S(\cA)$ of finite dimension $\dim(\cA)$. Then there exists $\cB$ of dimension $\dim(\cB)\geq \dim(\cA)$ and $\ket{\psi}_{AB}\in \cA\otimes \cB$ such that
\[
\tr_B \ket{\psi}\bra{\psi}_{AB} = \rho_A.
\]
The state $\ket{\psi}_{AB}$ is called a purification of $\rho_A$.

The trace distance between two quantum states $\rho$ and $\sigma$ is
$$\Delta(\rho,\sigma)=||{\rho}-{\sigma}||_{\mathrm{tr}},$$
where $||X||_{\mathrm{tr}}=\frac{1}{2}\tr{\sqrt{X^{\dag}X}}$ is the trace norm of $X$.
Hence the trace distance between two pure states $\ket{\alpha}, \ket{\beta}$ is \begin{align}\tdist(\ketbra{\alpha}, \ketbra{\beta}) = \sqrt{1-\abs{\langle \alpha | \beta \rangle}^2}~.\end{align}
\begin{lemma}\label{trace-in lemma}
	Consider a quantum state $\rho_{XY}$ over two registers $X,Y$, and denote $\rho_X = \tr_{Y}(\rho_{XY})$. Then if there exists $\epsilon, \ket{\varphi}$ s.t.\ $\tdist(\rho_X, \ketbra{\varphi}) \le \epsilon$, then there exists $\tilde{\rho}_Y$ s.t.\ $\tdist(\rho_{XY}, \ketbra{\varphi} \otimes \tilde{\rho}_Y) \le \sqrt{\epsilon}$. Furthermore, if $\rho_{XY}$ is pure then so is $\tilde{\rho}_Y$.
\end{lemma}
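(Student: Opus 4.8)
The plan is to exploit the fact that if $\rho_X$ is close to a \emph{pure} state $\ketbra{\varphi}$, then almost all of the mass of $\rho_{XY}$ must lie in the subspace where register $X$ is in state $\ket{\varphi}$; projecting onto that subspace will therefore perturb $\rho_{XY}$ by at most $\sqrt{\epsilon}$, and the projected state will automatically have the product form we want. Concretely, I would set $P:=\ketbra{\varphi}_X\otimes I_Y$, which is a projector, and compute its acceptance probability on $\rho_{XY}$: tracing out $Y$ gives $\tr(P\rho_{XY})=\bra{\varphi}\rho_X\ket{\varphi}$, and since the trace distance is the optimal distinguishing advantage, $\bigl|\bra{\varphi}\rho_X\ket{\varphi}-1\bigr|\le\tdist(\rho_X,\ketbra{\varphi})\le\epsilon$, so $\tr(P\rho_{XY})=\bra{\varphi}\rho_X\ket{\varphi}\ge 1-\epsilon$.

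Next I would apply the gentle measurement lemma (Lemma~\ref{lemma:gentle}) with $\Lambda=P$. The one point to be careful about is that $P$ is a genuine projector, so $\sqrt{P}=P$ and the post-measurement state is exactly $\rho'_{XY}:=P\rho_{XY}P/\tr(P\rho_{XY})$; the lemma then gives $\tdist(\rho'_{XY},\rho_{XY})\le\sqrt{\epsilon}$. Finally I would read off the structure of $\rho'_{XY}$: because $P=\ketbra{\varphi}_X\otimes I_Y$ is a product projector whose factor on $X$ is rank one, one has $P\rho_{XY}P=\ketbra{\varphi}_X\otimes\bigl(\bra{\varphi}_X\rho_{XY}\ket{\varphi}_X\bigr)$, so defining $\tilde{\rho}_Y:=\bra{\varphi}_X\rho_{XY}\ket{\varphi}_X/\bra{\varphi}\rho_X\ket{\varphi}$ — which is positive semidefinite with unit trace, hence a legitimate state on $Y$ — yields $\rho'_{XY}=\ketbra{\varphi}_X\otimes\tilde{\rho}_Y$. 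Then $\tdist\bigl(\rho_{XY},\ketbra{\varphi}\otimes\tilde{\rho}_Y\bigr)=\tdist(\rho_{XY},\rho'_{XY})\le\sqrt{\epsilon}$, as claimed.

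For the ``furthermore'' clause, observe that if $\rho_{XY}=\ketbra{\Psi}_{XY}$ is pure then $P\ket{\Psi}_{XY}=\ket{\varphi}_X\otimes\ket{v}_Y$ with $\ket{v}_Y:=\bra{\varphi}_X\ket{\Psi}_{XY}$, so $P\rho_{XY}P$ has rank at most one and $\tilde{\rho}_Y=\ketbra{v}_Y/\langle v|v\rangle$ is pure. I do not foresee a genuine obstacle here: the argument is short, and the only steps that merit attention are (i) getting the clean bound $\tr(P\rho_{XY})\ge 1-\epsilon$ (rather than $1-2\epsilon$) by invoking the operational characterization of trace distance rather than a crude Hölder estimate, and (ii) verifying that the post-measurement state of the \emph{projective} measurement $P$ literally has the product form $\ketbra{\varphi}_X\otimes\tilde{\rho}_Y$, which is exactly what the product-projector-with-rank-one-$X$-factor structure delivers.
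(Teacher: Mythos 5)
Your proof is correct, and it takes a genuinely different route from the paper's. The paper first reduces to the case of a pure $\rho_{XY}$ (by purifying into an auxiliary register and using monotonicity of trace distance under partial trace), takes $\ket{\varphi}=\ket{0}$ without loss of generality, writes out the amplitudes $\ket{\alpha}=\sum_{x,y}w_{x,y}\ket{x}\ket{y}$, defines the conditional state $\ket{\beta}\propto\sum_y w_{0,y}\ket{y}$, and then computes the overlap exactly, obtaining $\tdist\bigl(\ketbra{\alpha},\ketbra{0}\otimes\ketbra{\beta}\bigr)=\sqrt{1-P_0}\le\sqrt{\epsilon}$. You instead work directly with the (possibly mixed) $\rho_{XY}$: you bound $\tr(P\rho_{XY})=\bra{\varphi}\rho_X\ket{\varphi}\ge 1-\epsilon$ via the operational characterization of trace distance, invoke the gentle measurement lemma (the paper's Lemma~\ref{lemma:gentle}, whose $\sqrt{\epsilon}$ bound indeed matches the paper's normalization $\|X\|_{\tr}=\tfrac12\tr\sqrt{X^\dagger X}$), and read the product form $\ketbra{\varphi}_X\otimes\tilde{\rho}_Y$ off the identity $P\rho_{XY}P=\ketbra{\varphi}_X\otimes\bra{\varphi}_X\rho_{XY}\ket{\varphi}_X$; in the pure case your $\tilde{\rho}_Y$ coincides with the paper's $\ketbra{\beta}$, so the ``furthermore'' clause comes out the same way. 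What each approach buys: yours is shorter, avoids the purification-and-reduction step, and handles mixed $\rho_{XY}$ in one stroke, at the cost of importing gentle measurement as a black box (which is logically harmless here, since that lemma is quoted from the literature and does not depend on this one); the paper's computation is more elementary and self-contained, essentially re-deriving the rank-one-projector special case of gentle measurement with the exact value $\sqrt{1-P_0}$. The only caveats worth noting in your write-up — the normalization convention in the gentle measurement lemma, and the degenerate case $\epsilon\ge 1$ (where the claim is vacuous since trace distance is at most $1$, and $\tr(P\rho_{XY})$ could vanish) — are cosmetic and do not affect correctness.
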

\begin{proof}
It is sufficient w.l.o.g to prove for a pure $\rho_{XY}$, since it is always possible to purify $\rho_{XY}$ by adding an additional register $Z$, and consider the pure state $\rho_{XYZ}$. The transitivity of the partial trace operation implies that if the theorem is true for $X, (YZ)$, then it is also true for $X,Y$. Also assume w.l.o.g that $\ket{\varphi}=\ket{0}$ (this is just a matter of choosing a basis elements).

Thus we will provide a proof in the case where the joint state of $X,Y$ can be written as a superposition $\ket{\alpha} = \sum_{x,y} w_{x,y} \ket{x}\ket{y}$. Define $P_0 = \Pr[X=0] = \sum_{y} \abs{w_{0,y}}^2$, and note that it must be the case that $P_0 \ge 1-\epsilon$. To see this, note that $P_0$ is the probability of measuring $X=0$ in the experiment where we first trace out $Y$ and then measuring $X$. Since $\tdist(\rho_X, \ketbra{0}) \le \epsilon$, 
the probability of measuring $X=0$ after tracing out $Y$ is $\epsilon$ close to the probability of measuring $X=0$ in $\ketbra{0}$, which is $1$ (see, e.g., \cite{AKN98}). The claim $P_0 \ge 1-\epsilon$ follows.

Now define $\ket{\beta} = \tfrac{1}{\sqrt{P_0}} \sum_y w_{0,y} \ket{y}$, and let $\tilde{\rho}_Y = \ketbra{\beta}$. Then
\begin{align}
\tdist(\rho_{XY}, \ketbra{0} \otimes \tilde{\rho}_Y) = \tdist(\ketbra{\alpha}, \ketbra{0} \otimes\ketbra{\beta}) = \sqrt{1-\abs{\langle \alpha | (0 , \beta) \rangle}^2}~.
\end{align}
We have 
\begin{align}
\langle \alpha | (0 , \beta) \rangle = \tfrac{1}{\sqrt{P_0}} \sum_{y} \abs{w_{0,y}}^2 = \sqrt{P_0}~,
\end{align}
which implies that indeed $\tdist(\rho_{XY}, \ketbra{0} \otimes \tilde{\rho}_Y) = \sqrt{1-P_0} \le \sqrt{\epsilon}$.
\end{proof}


\def\dbvec{{\db}}

\section{Security Analysis of Kereneidis et al.'s Protocol}
\label{sec:protocol}

For completeness, we restate\footnote{We make one minor adaptation -- see Remark~\ref{remark:reusableEbit}.} the QPIR protocol with pre-shared entanglement by Kerenidis et al.~\cite[Section 6]{KLGR16}. 
Given a database $\dbvec \in\binset^{\dblen}$ for some ${\dblen}=2^{\dbllen}$ as input to the server, and index $i \in [{\dblen}]$ as input to the client (If the client's input is a superposition, the algorithm is run in superposition), we denote the protocol $\Pi_{\dblen}$ as follows. 

The protocol $\Pi_{\dblen}$ is recursive and calls $\Pi_{{\dblen}/2}$ as a subroutine. For the execution of $\Pi_{\dblen}$, the parties are required to pre-share a pair of entangled state registers $\tfrac{1}{2^{{\dblen}/4}}\sum_{\auxr \in \binset^{{\dblen}/2}} \ket{\auxr}_R \ket{\auxr}_{R'}$, where $R$ is held by the server and $R'$ is held by the client. They also share an entangled state needed for the recursive application of the protocol $\Pi_{{\dblen}/2}$ (and the recursive calls it entails). Unfolding the recursion, this means that for all ${\dblen}'=2^{{\dbllen}'}$ with ${\dbllen}' \in [{\dbllen}-1]$, there is an entangled register of length ${\dblen}'$ shared between the client and the server.

The protocol execution is described in shorthand Figure~\ref{fig:logscheme}. In what follows we provide a detailed description and analyze the steps of the protocol to establish correctness and assert properties that will allow us to analyze privacy.

\begin{enumerate}
	\item 	If ${\dblen}=1$ then the database contains a single value. In this case there is no need for shared entanglement, and the server sends a register $F$ containing $\ket{\dbvec}$ (the final response) to the client, and the protocol terminates. This is trivially secure and efficient. Otherwise proceed to the next steps.
	
	\item The server denotes $\dbvec_0, \dbvec_1 \in \binset^{{\dblen}/2}$ s.t.\ $\dbvec= [\dbvec_0 \| \dbvec_1]$, i.e. the low-order and high-order bits of the database respectively. The server starts with two single-bit registers $Q_0, Q_1$ initialized to $0$. The server CNOTs $Q_b$ with the inner product of $R$ and $\dbvec_b$ so that it contains $\ket{\auxr \cdot \dbvec_b}_{Q_b}$, and sends $Q_0, Q_1$ to the client.
	
	At this point, the joint state between the client and (an honest) server is \[\sum_{\auxr \in \binset^{{\dblen}/2}} \ket{\auxr}_R \ket{\auxr}_{R'} \ket{\auxr \cdot \dbvec_0}_{Q_0} \ket{\auxr \cdot \dbvec_1}_{Q_1}~.\]
	
	In particular the reduced density matrix of the server's state is independent of the index ${i}$.
	
	\item Let $b^*=\lfloor \tfrac{i-1}{{\dblen}} \rceil$ denote the most significant bit of $i$. The client evaluates a $Z$ gate on $Q_{b^*}$. It sends $Q_0, Q_1$ back to the server.
	
	At this point, the joint state between the client and (an honest) server is \[\sum_{\auxr \in \binset^{{\dblen}/2}} (-1)^{\auxr \cdot \dbvec_{b^*}} \ket{\auxr}_R \ket{\auxr}_{R'} \ket{\auxr \cdot \dbvec_0}_{Q_0} \ket{\auxr \cdot \dbvec_1}_{Q_1}~.\]
	
	Importantly, the reduced density matrix of the server, which contains the registers $R, Q_0, Q_1$, is the diagonal matrix that corresponds to the classical distribution of sampling a random $\auxr$ in register $R$, and placing $\auxr \cdot \dbvec_0, \auxr \cdot \dbvec_1$ in $Q_0, Q_1$. This density matrix is independent of $b^*$ and therefore of $i$.
	
	\item The server again CNOTs $Q_b$ with the inner product of $R$ and $\dbvec_b$.
	
	At this point, the joint state between the client and (an honest) server is \[\sum_{\auxr \in \binset^{{\dblen}/2}} (-1)^{\auxr \cdot \dbvec_{b^*}} \ket{\auxr}_R \ket{\auxr}_{R'} \ket{0}_{Q_0} \ket{0}_{Q_1}~.\]
	
	From this point on we disregard $Q_0, Q_1$ since they remain zero throughout. Since this step only involves a local unitary by the server, we are guaranteed that its reduced density matrix is still independent of ${i}$.
	
	\item The server performs QFT on $R$ and the client performs QFT on $R'$. The resulting state is
	\begin{align*}
	\tfrac{1}{2^{3 {\dblen}/4}}  \sum_{\auxr,\vc{y},\vc{w} \in \binset^{{\dblen}/2} } (-1)^{\auxr \cdot (\dbvec_{b^*}\oplus\vc{y}\oplus\vc{w})} \ket{\vc{y}}_R \ket{\vc{w}}_{R'} = \tfrac{1}{2^{{\dblen}/4}} \sum_{\vc{y} \in \binset^{{\dblen}/2} } \ket{\vc{y}}_R \ket{\underbrace{\vc{y} \oplus \dbvec_{b^*}}_{\vc{w}}}_{R'}~.
	\end{align*}
	
	Since we only performed local operations on the server and client side (without communication), the server's density matrix remains perfectly independent of $b^*$ and thus of ${i}$.

	\item Note that at this point, the joint state of the client and server is a ``shifted'' entangled state where the shift corresponds to the half-database $\dbvec_{b^*}$ that contains the element that the client wishes to retrieve. More explicitly, $\dbvec[i] = \dbvec_{b^*}[i^*]$ for $i^*=i \pmod{{\dblen}/2}$ contains the $({\dbllen}-1)$ least significant bits of $i$. Therefore, for all $\vc{y}, \vc{w}$ in the support of the joint state, it holds that $\dbvec[i] = \vc{w}[i^*] \oplus \vc{y}[i^*]$.
	
	The client will now ignore (temporarily) the register $R'$ and execute $\Pi_{{\dblen}/2}$ recursively on index ${i}^*$. The (honest) server will carry out the protocol with the value $\vc{y}$ from the register $R$ serving as the server's database. Note that since the register $R'$ is not touched, for the purposes of executing the protocol the value $\vc{w}$ in $R'$ is equivalent to have been measured, and the value $\vc{y}$ in $R$ is equivalent to the deterministic register $\vc{w}\oplus \dbvec_{b^*}$.
	
	We are recursively guaranteed that in the end of the execution of $\Pi_{{\dblen}/2}$, the client receives a register $F$ containing the value $\vc{y}[i^*] = \vc{w}[i^*]\oplus \dbvec_{b^*}[i^*] = \vc{w}[i^*]\oplus \dbvec[i]$. Since the client still maintains the original register $R'$ containing $\vc{w}$, it can CNOT the value $\vc{w}[i^*]$ from $F$ and obtain $\ket{\dbvec[i]}_A$. Namely, in the end of the execution, the register $F$ indeed contains the desired value $\dbvec[i]$.

	\item Lastly, if the client and server desire to ``clean up'' and restore the shared entanglement so that it can be reused in consequent executions, the client can copy the contents of the register $F$ to a fresh register (which is possible since this register contains a classical value). Since the client and server are pure (i.e.\ do not measure) throughout the protocol, they can rewind the execution of the protocol to restore their initial joint entanglement.
	
\end{enumerate}
If the final cleanup step is not executed then the total number of rounds of $\Pi_{{\dblen}}$ is $2 {\dbllen} + 1$, and the total communication complexity is $4l+1$ (recall that ${\dbllen} = \log({\dblen})$). If the cleanup step is executed, the round complexity and communication complexity are doubled due to rewinding the execution.
\begin{remark}\label{remark:reusableEbit}
Note that in the original protocol by Kerenidis et al. step~\ref{step:cleanup} does not appear, and it is not mentioned that the shared entanglement can be cleaned and reused. 
\end{remark}

We conclude that for classical inputs for both the client and the server, the honest server's density matrix is independent of $i$. If the client first measures its input state, privacy holds. But since the very first operation is a CNOT operation (to determine the value of $i$) and since CNOT and measurements in the standard basis commute, we conclude that the server's reduced density matrix is independent of the client's input, even for inputs which are in superposition. 

\begin{lemma}
	The protocol $\Pi_{\dblen}$ is a PIR protocol with perfect correctness and perfect anchored privacy against honest servers.
	 It furthermore has communication complexity $O(\log {\dblen})$, and uses $O({\dblen})$ bits of (reusable) shared entanglement. 
\end{lemma}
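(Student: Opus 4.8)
The plan is to prove the four assertions of the lemma — perfect correctness, perfect anchored privacy against honest servers, $O(\log{\dblen})$ communication, and $O({\dblen})$ reusable shared entanglement — by formalizing the step-by-step analysis already given in the description of $\Pi_{\dblen}$, with the two non-trivial assertions (correctness and privacy) handled by induction on the recursion depth ${\dbllen}=\log{\dblen}$. For \textbf{correctness}, the base case ${\dblen}=1$ is immediate, since the server forwards $\ket{\dbvec}$ in register $F$. For the inductive step I would track the honest joint state exactly as in steps 2--6, using that after the two QFTs it equals $2^{-{\dblen}/4}\sum_{\vecy\in\binset^{{\dblen}/2}}\ket{\vecy}_R\ket{\vecy\oplus\dbvec_{b^*}}_{R'}$, so that in each branch $\vecy$ the register $R'$ holds $\vecy\oplus\dbvec_{b^*}$ and $\dbvec[i]=\vecy[i^*]\oplus(\vecy\oplus\dbvec_{b^*})[i^*]$ with $i^*=i\bmod{\dblen}/2$. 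Invoking the induction hypothesis on $\Pi_{{\dblen}/2}$ run with server-database register $R$ and client index $i^*$, the client obtains $F$ containing $\vecy[i^*]$, and a single CNOT from the retained $R'$ turns this into $\dbvec[i]$; the optional cleanup of step~7 only rewinds a unitary computation and does not disturb $F$.

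For \textbf{anchored privacy against honest servers}, I would first reduce to classical client index $i$: the client's very first action (step~3) is a controlled-$Z$ whose control is a bit of $i$, and every later client operation uses $i$ only through such standard-basis controls (recursively as well), so these commute with a standard-basis measurement of $i$ performed at the outset; hence the honest server's reduced state is unchanged if the client measures $i$ up front, and it suffices to show this reduced state is independent of $i$ for classical $i$. I then argue this by induction on ${\dbllen}$: the base case sends an index-independent register; in the inductive step the description shows that after each forward and backward message the server's reduced density matrix is the diagonal ``classical'' state (sample uniform $\auxr$, store $\auxr$, $\auxr\cdot\dbvec_0$, $\auxr\cdot\dbvec_1$), manifestly independent of $b^*$, and this is preserved by the local QFT on $R$. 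The only delicate point is the recursive call: the server's ``database'' for $\Pi_{{\dblen}/2}$ is the register $R$, which in the honest global state is entangled with the client's untouched register $R'$. Since $R'$ is never acted on during the recursion, tracing it out is equivalent to having measured it, so from the server's side the recursion runs on a classical value $\auxr$ that is uniform and independent of $i^*$; applying the IH to $\Pi_{{\dblen}/2}$ shows the server's reduced state throughout the recursion is independent of $i^*$, hence of $i$, and averaging over $\auxr$ preserves independence. Combining the levels, the honest server's reduced density matrix is independent of $i$ at every step, which is exactly perfect anchored privacy — the simulator $\mathscr{I}_t$ of Definition~\ref{def:qpir} simply computes the honest server's state at step $t$, which depends only on $\dbvec\in\cA_0$ and the pre-shared state in $\cA_p$.

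For \textbf{communication and entanglement}, I would just unwind the recursion: each level sends $Q_0,Q_1$ forward and back (four qubits) and makes one recursive call, with a single one-qubit response at ${\dblen}=1$, giving $4{\dbllen}+1$ qubits of communication and $2{\dbllen}+1$ rounds (both doubled if the cleanup rewind is performed), hence $O(\log{\dblen})$. The required pre-shared entanglement is one register of $2^{{\dbllen}'-1}$ EPR pairs for each recursion level ${\dbllen}'\in\{1,\dots,{\dbllen}\}$, summing to $\sum_{{\dbllen}'=1}^{{\dbllen}}2^{{\dbllen}'-1}={\dblen}-1=O({\dblen})$ bits; reusability follows from step~7: all operations of both honest parties are unitary, so after copying the classical output register the entire execution can be rewound to restore the initial joint entangled state.

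I expect the main obstacle to be the privacy argument across the recursive call (and its composition over all ${\dbllen}$ levels): one must justify rigorously that ``not touching $R'$'' legitimately allows treating $R$ as a classical uniform server-database for the purpose of invoking the induction hypothesis, that the entanglement between $R$ and $R'$ (and the deeper-level shared registers) leaks nothing to the server beyond what the IH already bounds, and that this reduction composes cleanly level by level — everything else is bookkeeping of the explicitly written-out intermediate states.
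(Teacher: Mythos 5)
Your proposal is correct and follows essentially the same route as the paper: the paper's proof simply wraps up the step-by-step state analysis given in the protocol description (server's reduced density matrix independent of $i$ at every communication point, the untouched $R'$ making the recursive call's ``database'' effectively classical, and superpositions over $i$ reducing to mixtures of classical executions via commutation of the standard-basis-controlled client operations with measurement), together with the direct count of communication and entanglement. Your version merely organizes this as an explicit induction on the recursion depth, which is a formalization of, not a departure from, the paper's argument.
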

\begin{proof}
The analysis in the body of the protocol establishes that the local view of the adversary is independent of $i$, when $i$ is treated as a fixed parameter. For the sake of our privacy notion, we are required to establish that the server's local state is independent of $i$ even when $i$ is an arbitrary quantum state.
This follows since the client refers to the index $i$ as constant, and therefore a superposition over $i$ will translate to a superposition over classical executions of the protocol, each with a fixed $i$. Since the server's local view is identical for any fixed $i$, it will also be in the same state for a superposition, and also for an arbitrary mixed state of $i$ and some potential environment.

The communication complexity and the amount of reusable shared entanglement needed in this protocol follow directly from the protocol. 
\end{proof}

We can therefore apply Theorem~\ref{thm:main} and conclude that $\Pi$ is secure against anchored-specious adversaries.
\begin{corollary}
	There exists a PIR protocol $\Pi$ with logarithmic communication complexity assuming linear shared entanglement, which is perfectly correct and anchored $O(\sqrt{\gamma})$-private against $\gamma$-specious adversaries.
\end{corollary}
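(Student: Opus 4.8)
The plan is to instantiate the generic protocol $\Pi$ with the Kerenidis et al.\ protocol $\Pi_{\dblen}$ of Appendix~\ref{sec:protocol} and then feed it into Theorem~\ref{thm:main}. First I would invoke the Lemma just established, which states that $\Pi_{\dblen}$ is a PIR protocol with perfect correctness and \emph{perfect} anchored privacy against honest servers, that its communication complexity is $O(\log \dblen)$, and that it uses $O(\dblen)$ bits of reusable shared entanglement. In particular $\Pi_{\dblen}$ meets the hypothesis of Theorem~\ref{thm:main} with $\epsilon = 0$, and its communication and entanglement bounds already match what the corollary demands.

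The only remaining hypothesis of Theorem~\ref{thm:main} to check is that $\Pi_{\dblen}$ is measurement-free. I would verify this by inspecting the protocol: the base case $\dblen = 1$ is a single message carrying $\ket{\dbvec}$; every other step applies only CNOTs, a single-qubit $Z$ gate, or QFTs, locally by the client or the honest server; the registers described as ``equivalent to having been measured'' in the recursive step are in fact merely left untouched, not measured; and the optional cleanup step only copies a classical value into a fresh register and then rewinds unitaries. Hence every operation of both honest parties is unitary, so the protocol is measurement-free.

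Having verified both hypotheses, Theorem~\ref{thm:main} yields that $\Pi_{\dblen}$ is anchored $(\epsilon + 3\sqrt{2\gamma})$-private against $\gamma$-specious servers with $\epsilon = 0$, i.e.\ anchored $3\sqrt{2\gamma}$-private, which is $O(\sqrt{\gamma})$. Combining this with the perfect correctness and the communication/entanglement bounds inherited from the Lemma completes the proof.

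I do not anticipate a real obstacle; the one point that deserves care is the bookkeeping certifying that $\Pi_{\dblen}$ contains no hidden measurement — in particular that the recursive call does not secretly collapse the register $R'$ and that the recursion terminates cleanly — because Theorem~\ref{thm:main} genuinely fails without the measurement-free assumption, as the counterexample protocol $\Pi'$ discussed after that theorem shows. Everything else is a direct substitution into Theorem~\ref{thm:main}.
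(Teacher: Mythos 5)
Your proposal is correct and matches the paper's own (implicit) argument: the paper likewise derives the corollary by combining the lemma (perfect correctness, perfect anchored privacy against honest servers, $O(\log \dblen)$ communication, $O(\dblen)$ reusable entanglement) with Theorem~\ref{thm:main} at $\epsilon=0$, yielding anchored $3\sqrt{2\gamma}$-privacy. Your explicit verification that $\Pi_{\dblen}$ is measurement-free is a worthwhile bit of bookkeeping the paper only handles implicitly (via the observation that both parties act unitarily throughout), but it is the same route.
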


\begin{figure}[t]
	\begin{small}
		\begin{center}
			\textbf{Recursive QPIR with Logarithmic Communication}
		\end{center}
		
		\noindent \textbf{Server input:} Database $\dbvec\in\binset^{\dblen}$.\\
		\textbf{Client input:} Index $i \in [{\dblen}]$.\\
				\textbf{Desired output:} Value $\dbvec[i]$ stored in register $F$ on the client side.\\
		\textbf{Setup:} Register $R$ for server and $R'$ for client in joint state $\tfrac{1}{2^{{\dblen}/4}}\sum_{\auxr \in \binset^{{\dblen}/2}} \ket{\auxr}_R \ket{\auxr}_{R'}$.\\ (This setup is for external recursion loop, internal loops require their own $R,R'$ defined recursively.)
		
		\begin{enumerate}
			\item 	If ${\dblen}=1$, copy the (single-bit) database into a register and send to client, then terminate (or go to clean up step \ref{step:cleanup} below).
			
			\item The server denotes $\dbvec_0, \dbvec_1 \in \binset^{{\dblen}/2}$ s.t.\ $\dbvec= [\dbvec_0 \| \dbvec_1]$, i.e. the low-order and high-order bits of the database respectively. The server starts with two single-bit registers $Q_0, Q_1$ initialized to $0$. The server CNOTs $Q_b$ with the inner product of $R$ and $\dbvec_b$ so that it contains $\ket{\auxr \cdot \dbvec_b}_{Q_b}$. It sends $Q_0, Q_1$ to the client. 
			
			\item Let $b^*=\lfloor \tfrac{i-1}{{\dblen}} \rceil$ denote the most significant bit of $i$. The client evaluates a $Z$ gate on $Q_{b^*}$. It sends $Q_0, Q_1$ back to the server. 
			
			\item The server again CNOTs $Q_b$ with the inner product of $R$ and $\dbvec_b$. 
			
			\item The server performs QFT on $R$ and the client performs QFT on $R'$. 
			
			\item Call $\Pi_{{\dblen}/2}$ recursively (with fresh $R,R'$ obtained from the setup). The server input is the contents of the register $R$ (of length ${\dblen}/2$). The client input is $i^*=i \pmod{{\dblen}/2} \in [{\dblen}/2]$. The client receives a response register $F$ as the output of the recursive call. It then CNOTs $R'[i^*]$ into $F$. Finally, $F$ contains the output of the recursive execution.
			
			\item \label{step:cleanup} If it is desired to restore the shared entanglement, copy the (classical) output into a fresh register and rewind the execution of the protocol.
		\end{enumerate}
		
	\end{small}
	\caption{The QPIR protocol of Kerenidis et al.}
	\label{fig:logscheme}
\end{figure}

%
%
%

\end{document}